\documentclass{article}
\usepackage{graphicx} 

\usepackage[english]{babel}
\usepackage{enumerate}
\usepackage{amsmath, amssymb, amsthm,  algpseudocode, algorithm}

\usepackage[a4paper,top=3cm,bottom=3cm,left=3cm,right=3cm,marginparwidth=1.75cm]{geometry}
\usepackage[textsize=small]{todonotes}
\usepackage[round]{natbib}
\usepackage{pifont}
\usepackage{bm}

\renewcommand{\qedsymbol}{$\blacksquare$}
\usepackage{amsfonts}
\usepackage{bbm}
\usepackage{enumitem}
\usepackage{graphicx}
\usepackage[colorlinks=true, allcolors=blue]{hyperref}
\usepackage{mathtools}
\usepackage{breqn}
\usepackage{float}
\usepackage{xcolor}
\setlength{\parindent}{0em} 
\setlength{\parskip}{1em} 

\usepackage{tcolorbox}

\newtheorem{theorem}{Theorem}[section]

\newtheorem{proposition}[theorem]{Proposition}

\newtheorem{remark}[theorem]{Remark}

\theoremstyle{definition}

\numberwithin{equation}{section}

\newcommand{\upperRomannumeral}[1]{\uppercase\expandafter{\romannumeral#1}}
\newcommand{\lowerRomannumeral}[1]{\lowercase\expandafter{\romannumeral#1}}

\newcommand{\R}{{\mathbb R}}

\usepackage{mathtools}
\usepackage{authblk}
\mathtoolsset{showonlyrefs}

\title{Simulation of square-root processes  made simple: applications to the Heston model}
\author{Eduardo Abi Jaber\thanks{eduardo.abi-jaber@polytechnique.edu. I am grateful for the financial support from the Chaires FiME-FDD, Financial Risks, Deep Finance \& Statistics and Machine Learning and systematic methods in finance at Ecole Polytechnique. I would like to thank Aurélien Alfonsi, Elie Attal,  Nathan De Carvalho, Louis-Amand Gérard, Shaun Li and Dimitri Sotnikov for fruitful discussions, as well as two anonymous referees for their valuable feedback.}}
\affil{Ecole Polytechnique, CMAP}

\begin{document}

\maketitle

\begin{abstract}
We introduce a  simple, efficient and accurate 
 numerical scheme that preserves non-negativity for simulating the square-root  process.   The novel idea is to simulate the integrated square-root  process first instead of the square-root  process itself.  Numerical experiments on realistic parameter sets, applied for the integrated process and the Heston model,  display high precision with a very low number of time steps. As a bonus, our scheme yields the exact limiting Inverse Gaussian distributions of the integrated square-root process with only one single time-step in two scenarios: (i) for high mean-reversion and volatility-of-volatility regimes, regardless of maturity; and (ii) for long maturities, independent of the other parameters.
\end{abstract}

\section*{Introduction}
Square-root processes are a cornerstone of stochastic modeling in finance, playing a central role in applications such as  interest rates \citep*{cox1985theory},  credit risk \citep*{duffie2012credit} and volatility \citep*{heston1993closed}. Despite their foundational importance, their simulation is often perceived as notoriously challenging. This work demonstrates that this perception is unwarranted by presenting an extremely  simple, accurate and nonnegative preserving  discretization scheme. The core idea is to simulate the integrated process first instead of the process itself. 

The square-root process $V$ is of the form:
\begin{align}\label{eq:cir}
V_t = V_0 + \int_0^t (a + b V_s) \, ds + c \int_0^t \sqrt{V_s} \, dW_s,
\end{align}
where $V_0, a \geq 0$, $b, c \in \mathbb{R}$, and $W$ is a standard Brownian motion.

The process exhibits an affine structure, meaning that both the instantaneous drift and the squared instantaneous diffusion are affine functions of $V$. This structure presents a double-edged sword:

\begin{itemize}
    \item \textbf{Tractability for pricing:} The affine structure leads to a  semi-analytical solution for the characteristic function, and opens the door to  efficient Fourier-based option pricing methods, see \cite*{cox1985theory,duffie2003affine, heston1993closed} and more generally \cite*{duffie2003affine}.
    \item \textbf{Simulation challenges:} The same affine structure introduces a square-root term in the diffusion coefficient, posing numerical challenges, especially in maintaining the non-negativity of a discretized version of $V$ of Euler-Maruyama type.
\end{itemize}

 \textbf{Literature review.} There is a substantial body of literature on the simulation of square-root processes.  To describe them, we define  the time-integrated process $U$ and the Brownian-integrated process  $Z$ by
\begin{align}\label{eq:UZ}
U_{s,t} := \int_s^t V_r \, dr, \quad Z_{s,t} := \int_s^t \sqrt{V_r} \, dW_r, \quad s \leq t.    
\end{align}  Despite differences in the proposed schemes, the approaches generally follow these steps:
\begin{itemize}
    \item Step 1: Update $V_{t_{i+1}}$ given $V_{t_i}$.
    \item Step 2: Sample $U_{t_i,t_{i+1}}$ knowing $(V_{t_i}, V_{t_{i+1}})$.
    \item Step 3: Deduce $Z_{t_i,t_{i+1}}$ using \eqref{eq:cir}, by setting:
    $$
    Z_{t_i,t_{i+1}} = \frac{1}{c} \left(V_{t_{i+1}} - V_{t_i} - a (t_{i+1} - t_i) - b U_{t_i,t_{i+1}} \right).
    $$
\end{itemize}

The schemes in the literature can be broadly categorized as follows:

\begin{description}
    \item[Category 1 - dynamics-based:] Methods that update $V_{t_{i+1}}$ knowing $V_{t_i}$ in Step 1 using (improved) Euler-Maruyama-type discretizations  of the stochastic differential equation of $V$ in \eqref{eq:cir}. And then approximate the integral $U_{t_i,t_{i+1}}$ from $(V_{t_i}, V_{t_{i+1}})$ in Step 2  using for instance left-point or mid-point rule. The advantage of such methods is usually their relative simplicity and interpretability. This category includes the works of \cite*{alfonsi2005discretization, alfonsi2010high, berkaoui2008euler, deelstra1998convergence, gyongy2011note, higham2005convergence, kahl2006fast, neuenkirch2014first} among many others. A limitation of this category is that not all Euler-type methods preserve the non-negativity of $V$ and can sometimes introduce important biases.   
    \item[Category 2 - distribution-based:] Methods that sample $V_{t_{i+1}}$  knowing $V_{t_i}$ in Step 1 using its exact (or approximate) distribution, which is a noncentral chi-squared distribution, which is relatively  efficient numerically.   And then sample $U_{t_i,t_{i+1}}$  based on either the exact or (approximate) distribution of $U_{t_i,t_{i+1}}$ conditional  on the endpoints  $(V_{t_i}, V_{t_{i+1}})$. Simulating from the exact distribution is possible as shown by  \cite{broadie2006exact}, however it is computationally expensive.  Later works initiated by \cite{glasserman2011gamma} introduced approximations to improve efficiency. Techniques such as matching moments coupled with educated guesses of the   conditional distribution of $U_{t_i,t_{i+1}}$  on   $(V_{t_i}, V_{t_{i+1}})$ have been proposed: with an  Inverse Gaussian distribution  by \cite{tse2013low}, Inverse Gamma distribution by \cite*{begin2015simulating}, Poisson conditioning of \cite{choi2024simulation} distributions, discrete random variables by \cite{lileika2020weak}. These methods offer a good trade-off between computational time and accuracy, but they usually require some amount of offline/online pre-computations and lack transparency in their mathematical link with the dynamics in \eqref{eq:cir}. 
\end{description}
   In some  cases, both categories are combined in different steps. The most notable example is the Quadratic Exponential (QE) scheme  developed by \cite{andersen2007efficient}  which samples $V_{t_{i+1}}$ knowing $V_{t_i}$ using a switching between a squared Gaussian (when $V_{t_i}$ is large enough) or a  tweaked exponential distribution   (when  $V_{t_i}$ is small)  with moment matching for Step 1; and then  uses the mid-point rule to approximate $U_{t_i,t_{i+1}}$ by  $(V_{t_i} + V_{t_{i+1}})/2$ in Step 2. Despite its hybrid nature, the methodology still adheres to the conventional structure of Steps 1, 2, and 3. The QE scheme seems quite  popular among practitioners and plays the role of a benchmark in the academic literature.  In the same vein, \cite{zhu2011simple} approximates the volatility process, rather than the variance process, using an Ornstein-Uhlenbeck process with moment matching. \cite{van2010efficient} proposes various hybrid approaches to accelerate the discretizations of Steps 1 and 2. Finally, we also mention the comparison study in \cite{possamai2011efficient}.

While these methodologies are typically treated as distinct, there has been limited investigation into the connections between the dynamics-based and distributional approaches. In a perfect world, one would like to read-off the distributional properties from Euler-type discretization method. This work aims to bridge that gap by establishing a clear link between discretized dynamics and distributional properties, thereby offering a deeper understanding of the underlying affine dynamics.

This work aims to reconcile the affine structure with Euler-Maruyama simulation schemes. Specifically, we ask:
\begin{center}
    \textit{Is there an unexplored way to exploit the affine structure for discretizing the process in an
Euler-Maruyama fashion?}\\
 \textit{Can we construct a simple, efficient, and accurate scheme for simulating square-root processes?}
\end{center}

We answer these questions \textit{affirmatively}.

\textbf{Main contributions.} 
Our approach builds upon simulating the integrated quantities \eqref{eq:UZ} first before the process $V$.  These quantities not only capture the essential information required for many financial applications but are also critical for simulating other processes, such as the \cite{heston1993closed} model.

We already have all the ingredients needed to present the discretization scheme we propose, which emphasizes the simplicity of the approach.  The following algorithm recursively constructs $ (\widehat{V}_i)_{i=0, \ldots, n} $, $ (\widehat{U}_{i, i+1})_{i=0, \ldots, n-1} $, and $ (\widehat{Z}_{i, i+1})_{i=0, \ldots, n-1} $. We coin it the \textbf{iVi} scheme for \textbf{i}ntegrated \textbf{$\bold V$} (or \textbf{V}ariance) \textbf{i}mplicit scheme.

\begin{algorithm}[H]
\caption{ \textbf{- The iVi scheme:} Simulation of \( \widehat V, \widehat U, \widehat Z \)}\label{alg:simulation}
\begin{algorithmic}[1]
\State \textbf{Input:} Initial value \( \widehat V_0 = V_0 \), partition \( 0 = t_0 < t_1 < \cdots < t_n = T \), parameters \( a, b, c \).
\State \textbf{Output:} \( \widehat V_{i+1}, \widehat U_{i, i+1}, \widehat Z_{i, i+1} \) for \( i = 0, \ldots, n-1 \).

\For{$i = 0$ to $n-1$}
    \State Compute the quantities:
\begin{align}\label{eq:alphasigma}
    \alpha_i ={\widehat V_i} \frac{e^{b(t_{i+1}-t_i) }- 1}b  + \frac{a}{b} \left( \frac{e^{b(t_{i+1}-t_i)} - 1}b  - (t_{i+1}-t_i) \right), \quad \sigma_i = c \frac{e^{b(t_{i+1} - t_i)} - 1}{b}.    
    \end{align}

    \State Simulate the increment of the integrated process:
    \begin{align}\label{eq:hatUsample}
   \widehat U_{i, i+1} \sim IG\left(\alpha_i, \left(\frac{\alpha_i}{\sigma_i}\right)^2 \right).
     \end{align}
    \State Set:
    \begin{align}\label{eq:Zii}
    \widehat Z_{i, i+1} = \frac{1}{\sigma_i} \left(\widehat U_{i, i+1} - \alpha_i\right)    
    \end{align}
    \State Update  the instantaneous process:
  \begin{align}\label{eq:Vii}
   \widehat V_{i+1} =\widehat V_i + a (t_{i+1} - t_i) + b \widehat U_{i, i+1} + c \widehat Z_{i, i+1}.
 \end{align}
\EndFor
\end{algorithmic}
\end{algorithm}
 Here, $IG$ refers to an Inverse Gaussian distribution, whose definition and a simple sampling algorithm are provided in Appendix~\ref{A:IG} and with the convention in \eqref{eq:alphasigma} that for $b=0$:  $\frac{e^{bt}-1}b=t$ and $\frac{ \frac{e^{bt} - 1}b  - t }{b}  = \frac{t^2}2$.

Aside from the inherent simplicity of Algorithm~\ref{alg:simulation}, which is both efficient and free from precomputations and fine-tuning of hyperparameters, we show that the  iVi scheme enjoys the following key features:
\begin{itemize}
    \item The scheme is built from straightforward right-endpoint Euler-type discretization rule applied to a single integral in the dynamics of $U$ and leverages the affine structure on the level of the dynamics of $(U,Z)$. 
    \item The scheme ensures the  non-negativity of the process $V$, i.e., $\widehat{V}_i \geq 0$ for all $i = 0, \ldots, n$, as established in Theorem~\ref{T:nonnegative}.
    \item The scheme captures essential distributional properties:
    \begin{itemize}
        \item[(i)] The first conditional moments  are perfectly captured, as proved in Proposition~\ref{P:moments}.
        \item[(ii)] The Inverse Gaussian distribution emerges naturally from the right-endpoint discretization of the conditional characteristic function of $U$, as shown in Remark~\ref{R:charcomparison}.
        \item[(iii)] The scheme accurately reproduces the Inverse Gaussian limiting distribution of $U_{0,T}$ in market regimes characterized by large mean reversion and high volatility of volatility, in line with the first observations made by \cite{mechkov2015fast} and the works of     \cite{abijaber2024reconciling, mccrickerd2019foundations}, see Remark~\ref{R:largemeanrev}.
        \item[(iv)] It also provides the exact Inverse Gaussian limiting distribution of $U_{0,T}$ for large maturities, as established by  \cite*{forde2011large}, Remark~\ref{R:largemeanrev}.
    \end{itemize}
    \item In terms of performance, the iVi scheme displays high precision with very few time-steps,  for the integrated process and the Heston model,  even under realistic and challenging scenarios, see Sections~\ref{S:U} and \ref{S:HestonNumerics}. As illustration, Figure~\ref{fig:intro} provides  six slices of the volatility surface in the Heston model (black) with challenging parameters calibrated to the market. Our iVi scheme (in orange) is computed with only one single time step  (equal to the maturity $T$) per slice! 
\end{itemize}

 \begin{figure}[h!]
 	\centering
 	\includegraphics[width=.8\textwidth]{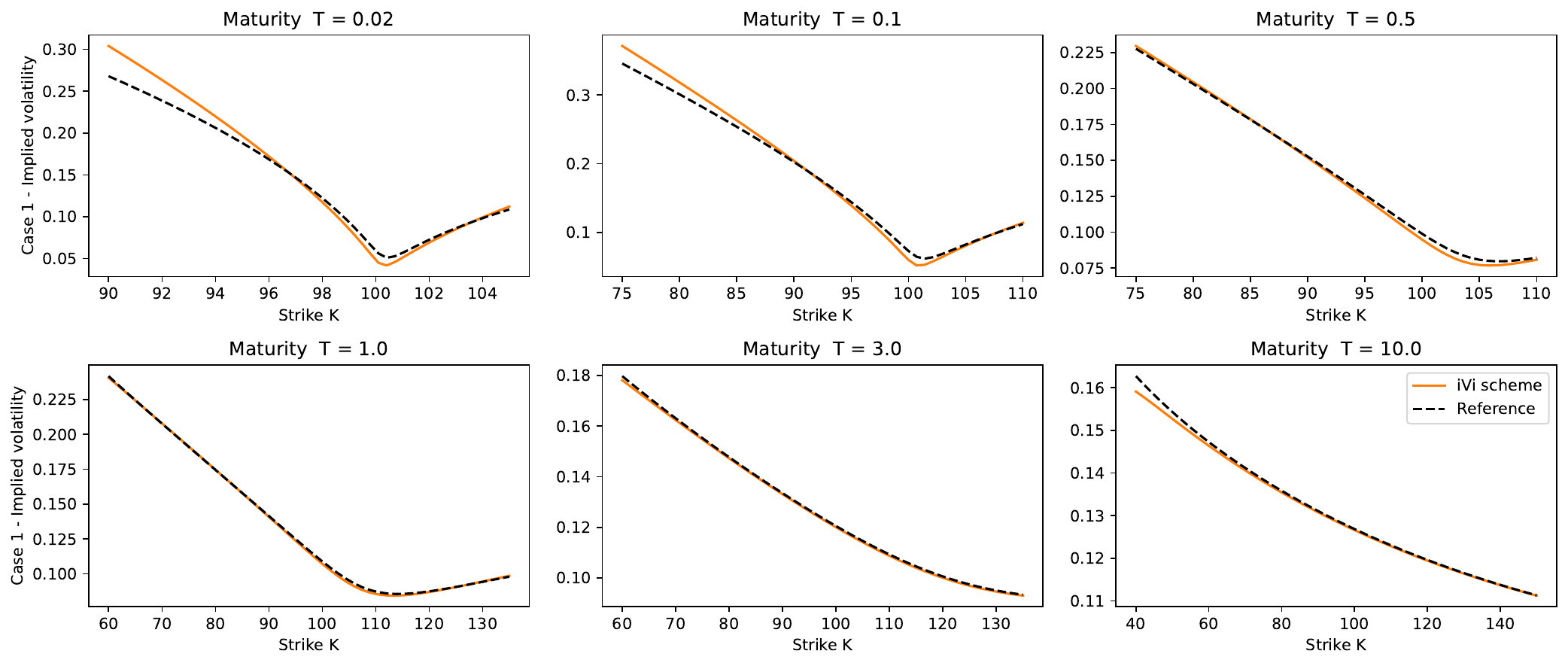} 
 	\caption{Heston's Implied volatility surface with the iVi scheme with one single time step and 2 million sample paths for model parameters as in Case 1 in Table~\ref{tab:parameter_cases}.}
 	\label{fig:intro} 
 \end{figure}

\textbf{Our approach.} Unlike the conventional methodology, our approach departs from the usual Step 1, 2, and 3 order. Instead of first sampling $V_{t_{i+1}}$, we focus on sampling $U_{t_i,t_{i+1}}$ given $V_{t_i}$ by discretizing the dynamics of $U$ using a simple Euler-Maruyama right-endpoint scheme. Leveraging the affine structure, we derive an intertwined relationship between $U_{t_i,t_{i+1}}$ and $Z_{t_i,t_{i+1}}$ in the form of the implicit equation:
\[
U_{t_i,t_{i+1}} \approx \alpha_i + \sigma_i Z_{t_i,t_{i+1}}, \quad \text{and} \quad [Z]_{t_i,t_{i+1}} = U_{t_i,t_{i+1}},
\]
where $[\cdot]$ denotes the quadratic variation. 

A solution (among many) to this equation is given by an Inverse Gaussian distribution, leading to the sampling formula in \eqref{eq:hatUsample}. Using this relationship, the definition of $Z_{t_i,t_{i+1}}$ in \eqref{eq:Zii} becomes clear, and the update of $V_{t_{i+1}}$ in \eqref{eq:Vii} follows naturally from the dynamics \eqref{eq:cir}. To justify the choice of the Inverse Gaussian distribution, we exploit the expression of the characteristic function of $U_{t_i,t_{i+1}}$ given $V_{t_i}$ in terms of Riccati equations. We demonstrate that applying the same right-endpoint rule at the level of the Riccati equations naturally leads to an Inverse Gaussian distribution. 

This novel perspective not only simplifies the simulation process but also provides deeper insights, which might be useful in other contexts, into the interplay between the dynamics and distributional properties of the square-root process—or more precisely, of the integrated square-root process!

\textbf{Outline.} In Section~\ref{S:scheme} we provide the mathematical derivation of the scheme as well as its distributional properties. Numerical illustrations for the scheme for the integrated process $U$ and the Heston model are provided in Sections~\ref{S:U} and \ref{S:HestonNumerics}. Useful properties of Inverse Gaussian distribution and the  Heston model are collected in the appendices, together with calibrated volatility surfaces for the parameters used in our experiments.

\section{The iVi scheme}\label{S:scheme}
 In this section, we will detail the mathematical derivation of the iVi scheme, prove that it  preserves non-negativity for the discretized process $\widehat V$ and study some of its distributional properties. 

\subsection{Deriving the iVi scheme: one integral to approximate}
The first step it to write the dynamics of the integrated process $U_{s,t} = \int_s^t V_r dr$  between $s$ and $t$. This is done by first writing the variation of constants formula for the process $V$ in \eqref{eq:cir} as 
\begin{align}\label{eq:cirvariation}
   V_r = V_s e^{b(r-s)} + a  \frac{e^{b(r-s)} - 1} b 
   + c \int_s^r e^{b(r-u)} \, dZ_{s,u}, \quad s \leq r,
\end{align}
where $dZ_{s,u}$ is the differential with respect to the second variable $u$, for $u\geq s$, i.e.~$dZ_{s,u}= \sqrt{V_u}dW_u$. Integrating $V_r$ between $s$ and $t$ and applying stochastic Fubini's theorem yields the following dynamics for $U$.
\begin{proposition}  The dynamics of the process $U$ defined in \eqref{eq:UZ} are given by 
\begin{align}\label{eq:Ust}
    U_{s,t} = {V_s} \frac{e^{b(t-s)} -1 }b  + \frac{a}{b} \left( \frac{e^{b(t-s)} - 1}b  - (t-s) \right)  + c \int_s^t e^{b(t-r)}  Z_{{s,r}}  \,dr, \quad s\leq t,
\end{align}
where we recall  the convention that for $b=0$:  $\frac{e^{bt}-1}b=t$ and $\frac{ \frac{e^{bt} - 1}b  - t }{b}  = \frac{t^2}2$.
\end{proposition}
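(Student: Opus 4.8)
The plan is to substitute the variation-of-constants representation \eqref{eq:cirvariation} into the definition $U_{s,t}=\int_s^t V_r\,dr$ and integrate term by term. The two deterministic contributions integrate immediately: $\int_s^t V_s e^{b(r-s)}\,dr = V_s\frac{e^{b(t-s)}-1}{b}$ and $\int_s^t a\frac{e^{b(r-s)}-1}{b}\,dr = \frac{a}{b}\bigl(\frac{e^{b(t-s)}-1}{b}-(t-s)\bigr)$, which already account for the first two terms of \eqref{eq:Ust}. All the work is therefore concentrated in the stochastic contribution $c\int_s^t\bigl(\int_s^r e^{b(r-u)}\,dZ_{s,u}\bigr)\,dr$.

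For this term I would first apply the stochastic Fubini theorem to interchange the Lebesgue integral in $r$ with the It\^o integral in $u$ over the triangle $\{s\le u\le r\le t\}$. This collapses the double integral into a single It\^o integral with deterministic kernel, namely $c\int_s^t\bigl(\int_u^t e^{b(r-u)}\,dr\bigr)\,dZ_{s,u} = c\int_s^t \frac{e^{b(t-u)}-1}{b}\,dZ_{s,u}$. To reach the target form, which displays a Lebesgue integral of the process $Z_{s,\cdot}$ rather than an It\^o integral, I would then integrate by parts: since the kernel $g(u)=\frac{e^{b(t-u)}-1}{b}$ is deterministic and $C^1$ with $g(t)=0$, and since $Z_{s,s}=0$, both boundary terms vanish and $\int_s^t g(u)\,dZ_{s,u} = -\int_s^t Z_{s,u}\,g'(u)\,du = \int_s^t e^{b(t-u)}Z_{s,u}\,du$, which is exactly the third term of \eqref{eq:Ust} after renaming $u\to r$.

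The main obstacle is the rigorous justification of the stochastic Fubini interchange, so I would verify its hypotheses explicitly; these hold here because the kernel $(r,u)\mapsto e^{b(r-u)}$ is bounded and deterministic on the compact triangle, while the square-root process $V$ (hence $\sqrt{V}$) enjoys the integrability needed for $Z_{s,\cdot}$ to be a well-defined square-integrable martingale. A clean alternative that avoids even invoking the stochastic version is to integrate by parts inside \eqref{eq:cirvariation} first, rewriting the inner It\^o integral as $Z_{s,r}+b\int_s^r e^{b(r-u)}Z_{s,u}\,du$, and then apply the ordinary (pathwise) Fubini theorem to the resulting Lebesgue double integral before recombining terms; the two routes produce the same expression.

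Finally I would record the $b=0$ limiting case to confirm consistency with the stated conventions: letting $b\to 0$ turns the kernels into $t-s$ and $(t-s)^2/2$ and sends $e^{b(t-r)}\to 1$, recovering $U_{s,t}=V_s(t-s)+a(t-s)^2/2+c\int_s^t Z_{s,r}\,dr$, in agreement with integrating the $b=0$ form of \eqref{eq:cirvariation} directly.
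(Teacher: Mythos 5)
Your proposal is correct, and its skeleton coincides with the paper's: substitute the variation-of-constants formula into $U_{s,t}=\int_s^t V_r\,dr$, integrate the deterministic terms, and apply the stochastic Fubini theorem once to collapse the double integral $\int_s^t\int_s^r e^{b(r-u)}\,dZ_{s,u}\,dr$ into $\int_s^t \frac{e^{b(t-u)}-1}{b}\,dZ_{s,u}$. Where you diverge is the finishing step: the paper converts this It\^o integral back into a Lebesgue integral of $Z_{s,\cdot}$ by a \emph{second} application of stochastic Fubini (after the change of variables $v=r-u$), interchanging the $dv$ and $dZ_{s,u}$ integrals and then evaluating $\int_s^{t-v}dZ_{s,u}=Z_{s,t-v}$; you instead integrate by parts against the deterministic $C^1$ kernel $g(u)=\frac{e^{b(t-u)}-1}{b}$, using $g(t)=0$ and $Z_{s,s}=0$ to kill both boundary terms, so that $\int_s^t g(u)\,dZ_{s,u}=-\int_s^t Z_{s,u}\,g'(u)\,du=\int_s^t e^{b(t-u)}Z_{s,u}\,du$. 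Both finishes are valid and of comparable length; yours has the merit of invoking the measure-interchange machinery only once and making explicit why no boundary contribution appears, while the paper's stays entirely within Fubini-type manipulations. Your remarked alternative --- integrating by parts inside \eqref{eq:cirvariation} first to write the inner integral as $Z_{s,r}+b\int_s^r e^{b(r-u)}Z_{s,u}\,du$ and then using only the ordinary pathwise Fubini theorem on continuous paths --- is genuinely more elementary than either route, since it avoids the stochastic Fubini theorem altogether; it would be a legitimate simplification of the paper's argument. Your attention to the hypotheses of stochastic Fubini (bounded deterministic kernel, square-integrability of $Z_{s,\cdot}$ coming from $\int_s^t\mathbb{E}[V_r]\,dr<\infty$) is a point the paper passes over silently, and is correctly justified.
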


\begin{proof}
Fix $s\leq t$.
      Integrating the dynamics of $V$ in \eqref{eq:cirvariation} between $s$ and $t$ gives
   $$
   U_{s,t} = \int_s^t V_r \, dr =   {V_s} \frac{e^{b(t-s)} - 1 }b  + \frac{a}{b} \left( \frac{e^{b(t-s)} - 1}b  - (t-s) \right) 
   + c \int_s^t \int_s^r e^{b(r-u)} \, dZ_{s,u}\, dr.
   $$
Successive applications of stochastic Fubini's theorem and a change of variables (using that $Z_{s,s}=0$), on the last term yield 
\begin{align*}
    \int_s^t \int_s^r e^{b(r-u)} \,dZ_{s,u}\, dr &= \int_s^t \int_u^t e^{b(r-u)} \,dr  \,dZ_{s,u}\\  
    &= \int_s^t \int_0^{t-u} e^{bv} \,dv  \,dZ_{s,u}\\
    &=  \int_0^{t-s} \int_s^{t-v}  \,dZ_{s,u} e^{bv} \,dv\\ &= \int_0^{t-s} Z_{s,{t-v}} e^{bv} \,dv\\
    &= \int_s^t Z_{s,u}e^{b(t-u)}du,
\end{align*}
which ends the proof. 
\end{proof}

The main idea behind our scheme is to simply discretize the equation \eqref{eq:Ust} for the integrated process $U$ between $t_{i}$ and $t_{i+1}$, assuming the knowledge of $\hat V_i \approx V_{t_i}$.  There is only one integral to discretize. Using the right endpoint rule on $Z_{t_i,\cdot}$, i.e.~approximating $Z_{t_i,s} $ by the value $Z_{t_i,t_{i+1}}$ for $s\in [t_i,t_{i+1})$, yields
$$
\int_{t_i}^{t_{i+1}} e^{b(t_{i+1}-r)}  Z_{t_i,r} \, dr \approx \int_{t_i}^{t_{i+1}} e^{b(t_{i+1}-r)}   \, dr Z_{{t_i,t_{i+1}}} = \frac{e^{b(t_{i+1}-t_i)} -1}b  Z_{{t_i,t_{i+1}}}.
$$
Plugging this in \eqref{eq:Ust}, yields 
\begin{align}\label{eq:Uti}
     U_{{t_i},{t_{i+1}}}  \approx  \hat V_i \frac{e^{b(t_{i+1}-t_i)}- 1}b  + \frac{a}{b} \left( \frac{e^{b(t_{i+1}-t_i)} - 1}b  - (t_{i+1}-t_i) \right) 
   + c \frac{e^{b(t_{i+1}-t_i)} -1}b   Z_{{t_i,t_{i+1}}}.
\end{align}
Now the key point is to observe that this forms an implicit equation on $ U_{{t_i},{t_{i+1}}}$ by rewriting $Z$ using a time-changed Brownian motion. Indeed, recall from the definition of $(Z_{t_i,s})_{s\geq t_i}$ in \eqref{eq:UZ} that it is a  continuous local martingale with quadratic variation   $\int_{t_i}^{\cdot} V_r dr = U_{t_{i},\cdot}$. Hence, an application of the celebrated Dambis, Dubins-Schwarz Theorem see \cite[Theorem 1.1.6]{revuz2013continuous} yields the representation of $(Z_{t_i,s})_{s\geq t_i}$ in terms of a time changed Brownian motion $Z_{t_i, s} = \widetilde W_{U_{t_i,s}}$ for all $s\geq t_{i}$ where $\widetilde W$ is a standard Brownian motion. 

Hence, plugging this in \eqref{eq:Uti},  approximating $U_{t_i, t_{i+1}}$  boils down to finding  a nonnegative random variable $\widehat U_{i,i+1}$ solving  
\begin{align}\label{eq:IGequation}
    \widehat U_{i,i+1} = \alpha_i + \sigma_i \widetilde W_{\widehat U_{i,i+1}},
\end{align}
with $\alpha_i$ and $\sigma_i$ as in \eqref{eq:alphasigma}.  

Said differently $\widehat U_{i,i+1}$ is a passage  time of the level $\alpha_i$ for the drifted Brownian motion $ (s -  \sigma_i \widetilde W_{s})_{s\geq 0} $. In particular,  the first passage time 
$$ X = \inf \left\{ s\geq 0:  s -  \sigma_i \widetilde W_{s}= \alpha_i \right\} $$
satisfies \eqref{eq:IGequation}. 
It is well known  that $X$ follows an  Inverse Gaussian distribution $IG(\mu_i,\lambda_i)$ with mean parameter $\mu_i= \alpha_i$  and shape parameter $\lambda_i = \frac{\alpha_i^2}{\sigma_i^2}$. Inverse Gaussian distributions are recalled in Appendix~\ref{A:IG}.  

Hence, we sample $\widehat U_{i,i+1}$ using  the Inverse Gaussian distribution as in \eqref{eq:hatUsample}. Beyond its tractability and efficient  sampling, see Algorithm~\ref{alg:IG_sampling}, the choice of the Inverse Gaussian distribution is further justified in Section~\ref{S:whyIG} where it is shown to be intimately linked with the distributional properties of the integrated square-root process $U$. Then, using \eqref{eq:IGequation} we can define the   $\widehat Z_{i,i+1}$ (which plays the role of $\widetilde W_{\widehat U_{i,i+1}}$) using equation \eqref{eq:Zii}. Finally, to update $\widehat V$, we write the square-root equation \eqref{eq:cir} between $t_i$ and $t_{i+1}$ as
$$ V_{t_{i+1}} = V_{t_i} + a (t_{i+1}-t_i) + b U_{t_i, t_{i+1}} + c Z_{t_i, t_{i+1}}. $$
This yields the update formula for $\widehat V_{i,i+1}$ in \eqref{eq:Vii} follows.

Putting everything together, we arrive at the iVi scheme of Algorithm~\ref{alg:simulation}.  

\begin{remark}
A simpler scheme can be derived without using the variation of constants formula \eqref{eq:cir}, after a simple integration of the equation \eqref{eq:cir} for $V$ between $s$ and $t$, to get
\begin{align}\label{eq:Unovar}
	U_{s,t} = V_s (t-s) + a \frac{(t-s)^2}{2} + \int_s^t (bU_{s,r}  + cZ_{s,r}) \, dr.
\end{align}
	Using the right endpoint to approximate the integral between $t_i$ and $t_{i+1}$ in \eqref{eq:Unovar},  we arrive to the implicit scheme
	$$ U_{i,i+1} = V_i (t_{i+1}-t_i) + a \frac {(t_{i+1}-t_i) ^2} 2  + b  (t_{i+1}-t_i)    U_{i,i+1} + c (t_{i+1}-t_i) Z_{i,i+1}.$$
	Hence, as long as $(1 -  b(t_{i+1}-t_i)) >0$ (which is the case for mean reverting dynamics, i.e.~$b \leq 0$), using the same ideas as above we get to 
	 \begin{align}
		\widetilde U_{i, i+1} \sim IG\left(\frac{\widetilde \alpha_i}{(1 - b(t_{i+1} - t_i))}, \left(\frac{\widetilde \alpha_i}{\widetilde \sigma_i}\right)^2 \right),
	\end{align}
with  
$$  \widetilde \alpha_i = \widetilde V_i (t_{i+1}-t_i) + a \frac {(t_{i+1}-t_i) ^2} 2, \quad \widetilde \sigma_i = c (t_{i+1}-t_i),$$
and 
	\begin{align}
		\widetilde Z_{i, i+1} = \frac{1}{\widetilde 
 \sigma_i} \left( (1 - b(t_{i+1} -t_i)) \widetilde U_{i, i+1} - \widetilde \alpha_i\right).    
	\end{align}
	Then, $\widetilde V_{i+1}$ is updated using \eqref{eq:Vii} but with $(\widetilde V,\widetilde U,\widetilde Z)$ instead of $(\widehat V,\widehat U,\widehat Z)$. It should be clear that such scheme introduces an additional bias compared to Algorithm~\ref{alg:simulation} since,  the drift here is approximated instead of being exactly solved for, using the variation of constants formula. \qedsymbol 
\end{remark}

\subsection{Well-definedness and non-negativity of $\widehat V$}

To ensure the well-definedness of Algorithm~\ref{alg:simulation}, we still have to check that the mean parameter $\alpha_i$ in \eqref{eq:alphasigma} of the Inverse Gaussian distribution is nonnegative for each $i=0,\ldots, n$, with the convention that $IG(0,0)$ is equal  to $0$. 
This is the object of the next theorem, and as a by-product, we will obtain that the discretized square-root process $\widehat V$ satisfies $\widehat V_i \geq 0$ for all $i=0,\ldots, n$.

\begin{theorem}\label{T:nonnegative}
    Let $V_0,a \geq 0$ and $b,c \in \mathbb R$.  Consider  \(  (\widehat V_i)_{i=0, \ldots, n} \), \( ( \widehat U_{i, i+1})_{i=0, \ldots, n-1} \), and \( ( \widehat Z_{i, i+1})_{i=0, \ldots, n-1} \) satisfying the recursions of the {\normalfont iVi} scheme in  Algorithm~\ref{alg:simulation}. Then,  we have that 
    \begin{align}
       \widehat V_i,  \alpha_i \geq 0, \quad i=0,\ldots, n.
    \end{align}
\end{theorem}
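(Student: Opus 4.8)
The plan is to argue by induction on $i$, establishing simultaneously that $\widehat V_i \ge 0$ and that the mean parameter $\alpha_i \ge 0$ at each step. The base case $\widehat V_0 = V_0 \ge 0$ holds by assumption. For the inductive step I would assume $\widehat V_i \ge 0$ and then prove, in order, (a) that $\alpha_i \ge 0$, which makes the $IG$ sampling in \eqref{eq:hatUsample} well-defined (with the convention $IG(0,0)=0$) and in particular gives $\widehat U_{i,i+1}\ge 0$, and (b) that $\widehat V_{i+1}\ge 0$. Throughout I write $\Delta_i = t_{i+1}-t_i>0$ and isolate the two building-block functions $f(x) = \frac{e^{bx}-1}{b}$ and $g(x) = \frac{1}{b}\big(f(x)-x\big) = \frac{1}{b^2}\big(e^{bx}-1-bx\big)$, which appear as the coefficients of $\widehat V_i$ and $a$ in the definition \eqref{eq:alphasigma} of $\alpha_i$.

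For step (a), the point is that both coefficients are nonnegative for every $b\in\R$ and every $\Delta_i>0$. The first, $f(\Delta_i)$, is nonnegative because $e^{b\Delta_i}-1$ and $b$ always share the same sign (and $f(\Delta_i)=\Delta_i$ when $b=0$); in fact $f(\Delta_i)>0$ strictly. The second, $g(\Delta_i)$, is nonnegative by the convexity inequality $e^{x}\ge 1+x$ applied at $x=b\Delta_i$ (and equals $\Delta_i^2/2$ when $b=0$). Since $\widehat V_i\ge 0$ by the inductive hypothesis and $a\ge 0$, it follows that $\alpha_i\ge 0$; as an Inverse Gaussian variable is supported on $[0,\infty)$, this yields $\widehat U_{i,i+1}\ge 0$.

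For step (b), I would substitute $\widehat Z_{i,i+1}=\sigma_i^{-1}(\widehat U_{i,i+1}-\alpha_i)$ from \eqref{eq:Zii} into the update \eqref{eq:Vii}, using $\sigma_i = c\,f(\Delta_i)$ so that $c/\sigma_i = 1/f(\Delta_i)$, and then write $\alpha_i = \widehat V_i\, f(\Delta_i) + a\,g(\Delta_i)$. The term $\widehat V_i$ cancels exactly, leaving an affine function of $\widehat U_{i,i+1}$,
\begin{align}
\widehat V_{i+1} = a\Big(\Delta_i - \tfrac{g(\Delta_i)}{f(\Delta_i)}\Big) + \widehat U_{i,i+1}\,\frac{e^{b\Delta_i}}{f(\Delta_i)},
\end{align}
where the coefficient of $\widehat U_{i,i+1}$ has been simplified using the identity $b\,f(\Delta_i)+1=e^{b\Delta_i}$. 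This coefficient equals $\frac{e^{b\Delta_i}}{f(\Delta_i)}>0$ since $e^{b\Delta_i}>0$ and $f(\Delta_i)>0$. Given $a\ge 0$ and $\widehat U_{i,i+1}\ge 0$, it then remains only to check that the coefficient of $a$ is nonnegative, which closes the induction.

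I expect this last verification to be the one genuine obstacle. Clearing the positive denominator $f(\Delta_i)$ reduces it to showing $\Delta_i\, f(\Delta_i) - g(\Delta_i) = b^{-2}\big((b\Delta_i-1)e^{b\Delta_i}+1\big)\ge 0$, i.e. to the scalar inequality $h(y):=(y-1)e^{y}+1\ge 0$ for all $y=b\Delta_i\in\R$. This follows from $h(0)=0$ together with $h'(y)=y e^{y}$, which is negative for $y<0$ and positive for $y>0$, so $y=0$ is the global minimum of $h$ and $h\ge 0$ everywhere; the degenerate case $b=0$ is handled by the stated limiting conventions, where the coefficient of $a$ becomes $\Delta_i^2/2\ge0$. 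With both coefficients nonnegative we obtain $\widehat V_{i+1}\ge 0$, completing the inductive step and hence the proof.
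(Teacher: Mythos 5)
Your proof is correct and follows essentially the same route as the paper: induction on $i$, substituting \eqref{eq:Zii} and the expression for $\alpha_i$ into \eqref{eq:Vii} so that $\widehat V_i$ cancels, leaving an affine function of $\widehat U_{i,i+1}$ with nonnegative coefficients, where the coefficient of $a$ reduces to the same scalar inequality $(y-1)e^y+1\geq 0$ that the paper phrases as $\frac{xe^x}{e^x-1}\lessgtr 1$ according to the sign of $x$. The only difference is presentational: you spell out the verification of that inequality (via $h'(y)=ye^y$) and the nonnegativity of the coefficients in $\alpha_i$, which the paper asserts without detail.
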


\begin{proof}
Since $a\geq 0$, it follows from the definition of $\alpha_i$ in \eqref{eq:alphasigma} that $\alpha_i\geq 0$ if $\widehat V_i \geq 0$, for all $i=0,\ldots, n$. Hence, it suffices to prove  that 
$$  \widehat V_i \geq 0, \quad i=0,\ldots, n. $$
First we note that for $i=0$, we have that 
$\widehat V_0 = V_0\geq 0$ by definition of $V_0$. Fix $i=0,\ldots, n-1$, let us show that $\widehat V_{i+1} \geq 0$ using \eqref{eq:Vii}. Plugging   in \eqref{eq:Vii}, the  expression for  $Z_{i,i+1}$ given by \eqref{eq:Zii}  and that of $\alpha_{i} \in \eqref{eq:alphasigma}$,  we  arrive to  
$$ V_{i+1} =  \frac{b e^{b(t_{i+1}- t_i)}}{e^{b(t_{i+1}- t_i)} - 1} U_{i,i+1} + \frac a b \left[\frac{b(t_{i+1}-t_i)  e^{b(t_{i+1}- t_i)}}{e^{b(t_{i+1}- t_i)} - 1} - 1 \right] $$
the first term is nonnegative for all $b \in \R$, since $U_{i,i+1} \geq 0$ by definition of an Inverse Gaussian distribution  and the second term is also nonnegative by using that $a\geq 0$ and observing that the function $f(x):= \frac{xe^{x}}{e^x -1}$ satisfies $f(x) \leq 1$ for $x\leq 0$ and $f(x)\geq 1$ for $x\geq 0$ so that  
$\frac{1}{b}(f(b ({t_{i+1}-t_i})) - 1 )\geq 0$ independently of the sign of $b$.  This ends the proof. 
\end{proof}

Figure~\ref{fig:samplepaths} displays   sample paths of  $(\widehat V, \widehat U, \widehat Z)$ constructed using the iVi scheme in Algorithm~\ref{alg:simulation} for $a>0$ (reflecting boundary) and $a=0$ (absorbing boundary). Notice how in both cases, all sample paths of $\widehat V$ remain nicely nonnegative and $\widehat U$ non-decreasing. For the first row, $\widehat V$ bounces back when it reaches $0$ while it gets absorbed at $0$ for the second row.

\begin{figure}[h!]
    \centering
    \includegraphics[width=1.\textwidth]{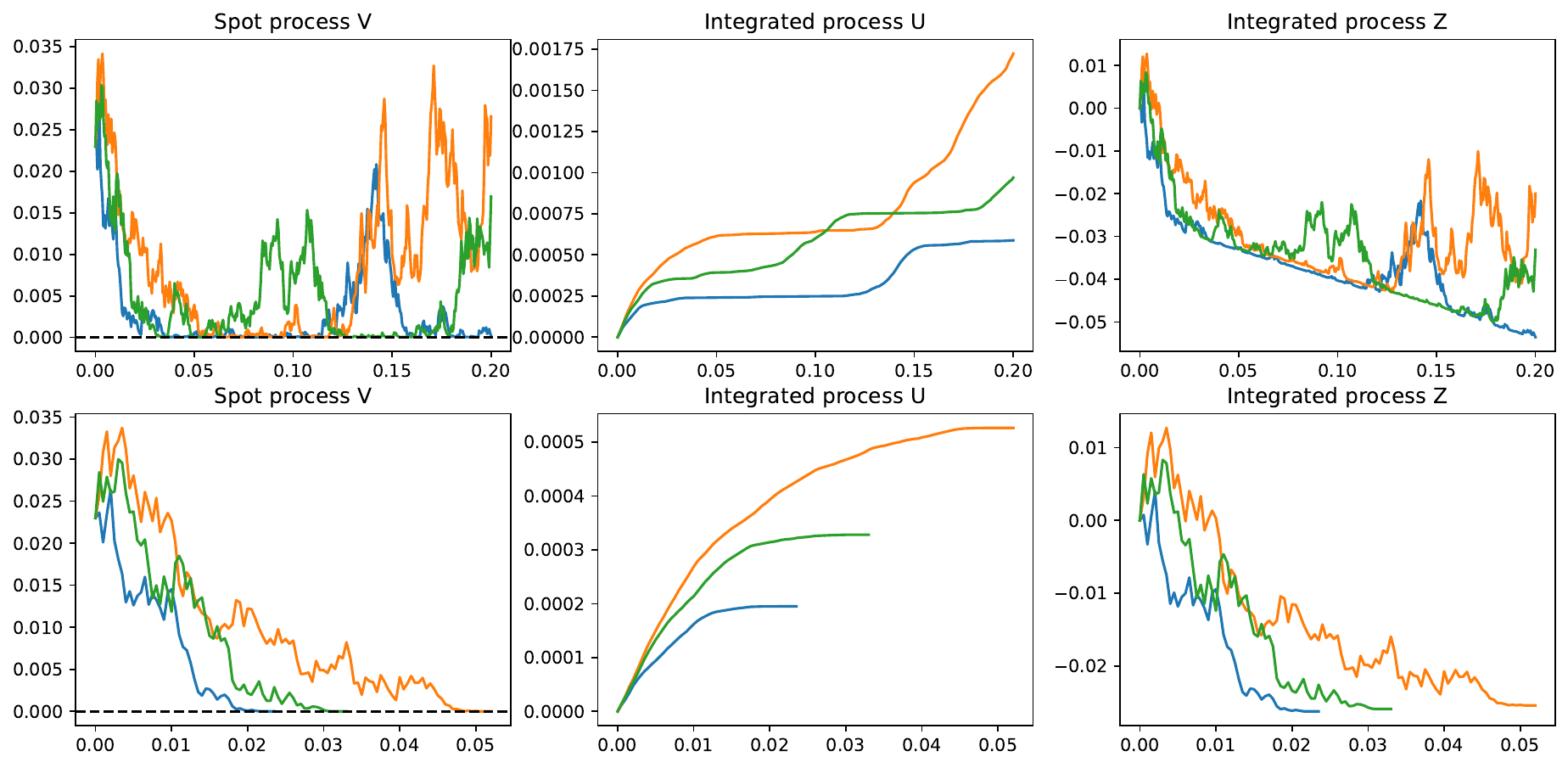} 
    \caption{Row 1 - Reflecting boundary $a>0$: Parameters as in Case 2 in Table~\ref{tab:parameter_cases} below. Row 2 - absorbing boundary: $a=0$ and other parameters unchanged. $T = 0.2$ and a $400$  time steps. }
    \label{fig:samplepaths} 
\end{figure}

\subsection{Distributional properties of the iVi scheme}

The Inverse Gaussian distribution is not the only one satisfying \eqref{eq:IGequation}, which naturally raises the question: 
\emph{Why choose the Inverse Gaussian distribution?} In this section, we provide three justifications based on distributional properties of the square-root process:  
\begin{itemize}
\item \textbf{First conditional moments} are perfectly captured for 
 the processes $(V,U,Z)$, as shown in  Proposition~\ref{P:moments}.
    \item \textbf{Discretization of the conditional characteristic function}  of $U$ naturally leads to an  Inverse Gaussian distribution, see Proposition~\ref{P:charcomparison} and Remark~\ref{R:charcomparison}.  
    \item \textbf{Convergence of the process $U$ towards an Inverse Gaussian process} happens in two configurations: (i) either in  certain regimes with large mean reversion and volatility of volatility for any maturity (ii) or for large maturities independently of the other parameters, refer to Remark~\ref{R:largemeanrev}.  
\end{itemize}

\subsubsection{First conditional moments are matched}

The next proposition shows that  the first conditional moments  of $(V,U,Z)$ are perfectly matched by those of $(\widehat V, \widehat U, \widehat Z)$. However, because of the approximation of the integral in \eqref{eq:Uti}, higher moments are not expected to be perfectly matched. 

\begin{proposition}\label{P:moments}
    Fix $i=0,\ldots, n-1$ and $v\geq 0$. Then, 
    \begin{align*}
        \mathbb E\left[ \widehat U_{i,i+1} \mid\widehat V_i =v \right] &=  \mathbb E\left[U_{t_i,t_{i+1}} \mid V_i =v \right] =  {v} \frac{e^{b(t_{i+1}-t_i)} - 1 }b  + \frac{a}{b} \left( \frac{e^{b(t_{i+1}-t_i)} - 1}b  - (t_{i+1}-t_i) \right), \\
         \mathbb E\left[ \widehat Z_{i,i+1} \mid \widehat V_i =v \right] &=  \mathbb E\left[Z_{t_i,t_{i+1}} \mid V_i =v \right] = 0,\\
          \mathbb E\left[ \widehat V_{i+1} \mid \widehat V_i =v \right] &=  \mathbb E\left[V_{t_i} \mid V_i =v \right] =  v e^{b(t_{i+1}-t_i)} + \frac{a}{b}(e^{b(t_{i+1}-t_i)} - 1).
    \end{align*}
\end{proposition}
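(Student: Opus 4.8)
The plan is to treat the discretized triple $(\widehat V,\widehat U,\widehat Z)$ and the continuous triple $(V,U,Z)$ separately, reducing each computation to two elementary facts: the mean of an $IG(\mu,\lambda)$ distribution is $\mu$, and the Brownian-integrated process $Z_{s,\cdot}$ is a true martingale started at zero, so that $\mathbb E[Z_{s,r}\mid V_s]=0$ for every $r\ge s$. Throughout I write $\Delta_i=t_{i+1}-t_i$. Once both sides are each shown to equal the stated explicit expression, the asserted equalities follow.

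First I would handle the discretized side. By the sampling rule \eqref{eq:hatUsample}, conditionally on $\widehat V_i=v$ the variable $\widehat U_{i,i+1}$ is $IG(\alpha_i,(\alpha_i/\sigma_i)^2)$ with $\alpha_i$ the deterministic quantity in \eqref{eq:alphasigma} evaluated at $\widehat V_i=v$; hence $\mathbb E[\widehat U_{i,i+1}\mid \widehat V_i=v]=\alpha_i$, which is exactly the claimed expression. For $\widehat Z_{i,i+1}$ I take expectations in the linear identity \eqref{eq:Zii} to obtain $\mathbb E[\widehat Z_{i,i+1}\mid\widehat V_i=v]=\sigma_i^{-1}(\alpha_i-\alpha_i)=0$. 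For $\widehat V_{i+1}$ I take expectations in \eqref{eq:Vii}, obtaining $v+a\Delta_i+b\alpha_i$, and then substitute $\alpha_i$ from \eqref{eq:alphasigma}; the $a\Delta_i$ terms cancel and $v+v(e^{b\Delta_i}-1)$ collapses to $ve^{b\Delta_i}$, leaving $ve^{b\Delta_i}+\tfrac{a}{b}(e^{b\Delta_i}-1)$.

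Next I would verify that the continuous moments equal the same expressions. For $U$, I read \eqref{eq:Ust} at $s=t_i$, $t=t_{i+1}$ with $V_{t_i}=v$: the first two deterministic terms already coincide with $\alpha_i$, and the remaining term $c\int_{t_i}^{t_{i+1}}e^{b(t_{i+1}-r)}Z_{t_i,r}\,dr$ has zero conditional mean by Fubini together with $\mathbb E[Z_{t_i,r}\mid V_i=v]=0$. For $Z_{t_i,t_{i+1}}$ the vanishing mean is immediate from the martingale property. For $V_{t_{i+1}}$ (the last line of the statement should read $V_{t_{i+1}}$ rather than $V_{t_i}$) I use the variation-of-constants formula \eqref{eq:cirvariation} at $r=t_{i+1}$, $s=t_i$: the stochastic integral $c\int_{t_i}^{t_{i+1}}e^{b(t_{i+1}-u)}\,dZ_{t_i,u}$ again has zero mean, leaving $ve^{b\Delta_i}+\tfrac{a}{b}(e^{b\Delta_i}-1)$.

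The only genuinely non-routine point is justifying the vanishing expectations, namely that $Z_{t_i,\cdot}$ and the stochastic integral against $dZ$ are true martingales (not merely local) and that Fubini applies. This follows because the square-root process satisfies $\mathbb E[\int_{t_i}^{t_{i+1}}V_u\,du]<\infty$, so $Z_{t_i,\cdot}$ is an $L^2$ martingale; the same bound controls $\int e^{b(t_{i+1}-u)}\,dZ_{t_i,u}$ and makes $r\mapsto\mathbb E|Z_{t_i,r}|$ integrable, legitimizing the interchange of expectation and the $dr$-integral. The remaining work is purely algebraic, consisting of collecting the exponential factors in the $\widehat V$ computation and invoking the $b=0$ convention of \eqref{eq:alphasigma}, which is handled either by the stated limiting values or by continuity in $b$.
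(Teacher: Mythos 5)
Your proposal is correct and follows essentially the same route as the paper: the discretized moments are obtained from the Inverse Gaussian mean $\alpha_i$ together with the linear relations \eqref{eq:Zii} and \eqref{eq:Vii}, and the continuous moments from \eqref{eq:cirvariation}, \eqref{eq:UZ} and \eqref{eq:Ust} (the paper simply calls this last part ``straightforward,'' whereas you spell out the true-martingale and Fubini justifications, which is a harmless refinement). Your remark that the last line of the statement should read $V_{t_{i+1}}$ rather than $V_{t_i}$ correctly identifies a typo in the statement, not a gap in your argument.
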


\begin{proof}
The explicit expressions for the conditional moments of $(V,U,Z)$ are straightforward to obtain from   \eqref{eq:cirvariation}, \eqref{eq:UZ} and \eqref{eq:Ust}.  Let us look at those of $(\widehat V, \widehat U, \widehat Z)$.
    Conditional on $\widehat V_i$, it follows from \eqref{eq:hatUsample} that $\widehat U_{i,i+1}$  follows an Inverse Gaussian distribution with mean  $\alpha_i$, see \eqref{eq:IGmean}. Using the expression of $\alpha_i$ in \eqref{eq:alphasigma} we get 
    $$ \mathbb E\left[ \widehat U_{i,i+1}\mid\widehat V_i =v \right]  = {v} \frac{e^{b(t_{i+1}-t_i)} - 1}b  + \frac{a}{b} \left( \frac{e^{b(t_{i+1}-t_i)} - 1}b  - (t_{i+1}-t_i) \right),$$
    which proves the  equality for the first moment of  $U$.  Using the construction of $\widehat Z_{i,i+1}$ in \eqref{eq:Zii} it follows that 
    $ \mathbb E\left[ \widehat Z_{i,i+1} \mid\widehat V_i =v \right] = 0$, which shows the  equality for the first moment of $Z$. Finally, using \eqref{eq:Vii}, we can compute 
    \begin{align*}
         \mathbb E\left[ \widehat V_{i+1} \big | \widehat V_i =v \right] &= v + a(t_{i+1} - t_i) + b  \mathbb E\left[ \widehat U_{i,i+1} \big | \widehat V_i =v \right] + c  \mathbb E\left[ \widehat Z_{i,i+1} \big | \widehat V_i =v \right]\\
         &=  v + a(t_{i+1} - t_i) + {v} (e^{b(t_{i+1}-t_i)} - 1)  + a \left( \frac{e^{b(t_{i+1}-t_i)} - 1}b  - (t_{i+1}-t_i) \right)\\
         &=  v e^{b(t_{i+1}-t_i)} + \frac{a}{b}(e^{b(t_{i+1}-t_i)} - 1),
    \end{align*}
    which proves the equality for the first moment of $V$ and ends the proof. 
\end{proof}

\subsubsection{The distributions are intimately linked}\label{S:whyIG}

To begin, the next proposition highlights striking similarities between the conditional characteristic functions of $\widehat{U}$ and $U$: both exhibit an exponentially affine dependence on $v$. More importantly, the Inverse Gaussian distribution emerges naturally as an implicit Euler-type discretization of the Riccati equations governing the  conditional characteristic function of $U$ between two consecutive time steps.   In what follows, we use the principal branch for the complex square-root.

\begin{proposition}\label{P:charcomparison}
   Fix $w\in \mathbb C$ such that $\Re(w)\leq 0$. Fix $i=0,\ldots, n-1$ and $v\geq 0$. Then, 
    \begin{align}
        \mathbb E\left[ \exp\left( w\widehat U_{i,i+1}\right) \mid\widehat V_i =v \right] &= \exp \left( \widehat \phi_{i,i+1} + \widehat \psi_{i,i+1} v \right), \label{eq:charhatU}\\
        \mathbb E\left[ \exp\left( wU_{t_i,t_{i+1}}\right) \mid V_i =v \right] &= \exp \left(  \phi (t_{i+1} - t_i) +  \psi (t_{i+1} - t_i) v \right), \label{eq:charU}
         \end{align}
         where 
         \begin{align}\label{eq:widehatpsi}
         \widehat \phi_{i,i+1}& = \frac{ac}{\sigma_i b}\left( \frac{e^{b(t_{i+1} -t_i)}- 1}{b} -(t_{i+1} -t_i) \right)   \widehat \psi_{i,i+1} \quad \text{and} \quad 
   \widehat \psi_{i,i+1} = \frac{  1 -  \sqrt{1 - 2 w  \sigma_i^2}}{c \sigma_i},
\end{align}
with $\sigma_i$ given by \eqref{eq:alphasigma}   and      $(\phi,\psi)$ are the functions explicitely given by \eqref{eq:Hestonexplicit}. In particular, $\widehat \psi_{i,i+1}$ is a root of the quadratic polynomial
\begin{align}\label{eq:rootpsi}
    \widehat \psi = w \frac{e^{b(t_{i+1}-t_i)} - 1}{b} + \frac{c\sigma_i}{2} \widehat \psi^2,
\end{align}
and $(\phi,\psi)$ solve the equations 
         \begin{align}
             \phi(t) &= a \int_0^t  \psi(s) ds, \label{eq:Ricvariation1} \\
             \psi(t) &= w\frac{e^{b t} - 1 }{b} +   \frac{c^2}2\int_0^t e^{b(t-s)} \psi^2(s) ds, \quad t\geq 0. \label{eq:Ricvariation}
         \end{align}
\end{proposition}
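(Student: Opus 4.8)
The statement splits into two independent halves: the discrete quantity $\widehat U_{i,i+1}$ (equations \eqref{eq:charhatU}, \eqref{eq:widehatpsi}, \eqref{eq:rootpsi}) and the exact process $U_{t_i,t_{i+1}}$ (equations \eqref{eq:charU}, \eqref{eq:Ricvariation1}, \eqref{eq:Ricvariation}). For the discrete half, the plan is to invoke the moment generating function of the Inverse Gaussian law. Writing $h = t_{i+1}-t_i$ and $\kappa = (e^{bh}-1)/b$, and recalling from Appendix~\ref{A:IG} that $X\sim IG(\mu,\lambda)$ has $\mathbb{E}[e^{wX}] = \exp\big(\tfrac{\lambda}{\mu}(1-\sqrt{1-2\mu^2 w/\lambda})\big)$ for $\Re(w)\le 0$, I would substitute $\mu=\alpha_i$ and $\lambda=(\alpha_i/\sigma_i)^2$ from \eqref{eq:hatUsample}. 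This collapses to $\exp\big(\tfrac{\alpha_i}{\sigma_i^2}(1-\sqrt{1-2w\sigma_i^2})\big)$. Inserting $\alpha_i = v\kappa + \tfrac{a}{b}(\kappa-h)$ and $\sigma_i = c\kappa$ from \eqref{eq:alphasigma}, the exponent becomes affine in $v$; reading off the coefficient of $v$ and the constant term gives exactly $\widehat\psi_{i,i+1}$ and $\widehat\phi_{i,i+1}$ as in \eqref{eq:widehatpsi}, using the identity $c\sigma_i = \sigma_i^2/\kappa$. This is pure algebra once the MGF is in hand; the degenerate case $\alpha_i=\sigma_i=0$ is covered by the convention $IG(0,0)=0$, for which both sides reduce to $1=\exp(0)$.

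To verify that $\widehat\psi_{i,i+1}$ is a root of \eqref{eq:rootpsi}, I would rearrange $\sqrt{1-2w\sigma_i^2}=1-c\sigma_i\widehat\psi_{i,i+1}$ and square, which after dividing by $2c\sigma_i$ and using $\kappa=\sigma_i/c$ yields precisely $\widehat\psi_{i,i+1} = w\kappa + \tfrac{c\sigma_i}{2}\widehat\psi_{i,i+1}^2$; equivalently, solving \eqref{eq:rootpsi} by the quadratic formula gives the two roots $(1\pm\sqrt{1-2w\sigma_i^2})/(c\sigma_i)$, and $\widehat\psi_{i,i+1}$ is the one selected by the principal branch.

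For the exact half, the claim is the classical affine/Feynman--Kac statement for the square-root process. I would introduce $g(t,v)=\mathbb{E}[\exp(w\int_0^t V_r\,dr)\mid V_0=v]$ and, by Feynman--Kac applied to the generator of \eqref{eq:cir}, obtain the PDE $\partial_t g = (a+bv)\partial_v g + \tfrac{c^2 v}{2}\partial_{vv}g + wv\,g$ with $g(0,\cdot)\equiv 1$. The exponentially affine ansatz $g=\exp(\phi(t)+\psi(t)v)$ separates the $v^0$ and $v^1$ coefficients into the Riccati system $\phi'=a\psi$, $\phi(0)=0$, and $\psi'=w+b\psi+\tfrac{c^2}{2}\psi^2$, $\psi(0)=0$. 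Integrating the first gives \eqref{eq:Ricvariation1}; applying variation of constants to the linear part of the second, treating $w+\tfrac{c^2}{2}\psi^2$ as forcing and using $\int_0^t e^{b(t-s)}\,ds=(e^{bt}-1)/b$, gives \eqref{eq:Ricvariation}, and the closed form is \eqref{eq:Hestonexplicit}. Rigour here rests only on the standard verification that the affine ansatz indeed produces the MGF, which I would cite from the affine-process literature (e.g.\ \cite{duffie2003affine}).

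The main obstacle---more a point of care than a genuine difficulty---is the consistent branch choice for the complex square root across both halves, so that the discrete exponent in \eqref{eq:charhatU} and the quadratic root $\widehat\psi_{i,i+1}$ are compatible with the branch used to solve the exact Riccati equation. Since $\Re(w)\le 0$ forces $\Re(1-2w\sigma_i^2)\ge 1>0$, the relevant arguments stay in the right half-plane off the negative-real branch cut, so the principal square root is unambiguous and this step is routine.
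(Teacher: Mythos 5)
Your proposal is correct and follows essentially the same route as the paper: the discrete half is the identical computation of plugging $\mu_i=\alpha_i$, $\lambda_i=\alpha_i^2/\sigma_i^2$ into the Inverse Gaussian characteristic function \eqref{eq:IGchar} and reading off the exponent affine in $v$, and the root property \eqref{eq:rootpsi} is the same squaring/rearranging check. For the exact half, the paper simply invokes the Heston characteristic function \eqref{eq:hestonchar} with $u=0$ and then applies variation of constants to the Riccati equation \eqref{eq:Ric2}, whereas you re-derive that affine formula via Feynman--Kac before performing the same variation-of-constants step; this is a presentational difference only, and both arguments ultimately rest on the standard affine-process result (incidentally, the paper's proof writes ``with $v=0$'' where $u=0$ is meant, a typo your derivation implicitly corrects).
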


\begin{proof}
We first prove \eqref{eq:charhatU} using the characteristic function of the Inverse Gaussian distribution recalled in \eqref{eq:IGchar}.  Conditional on $\widehat V_i$, it follows from \eqref{eq:hatUsample} that $\widehat U_{i,i+1}$  follows an Inverse Gaussian distribution with mean parameter $\mu_i = \alpha_i$  and shape parameter $\lambda_i = \frac{\alpha_i^2}{\sigma^2}$. Then, \eqref{eq:alphasigma} gives     $$ \frac{\lambda_i}{\mu_i} = \frac{\alpha_i}{\sigma_i^2} = \widehat V_i \frac{1}{c \sigma_i} + \frac{a}{\sigma_i^2 b} \left( \frac{e^{b(t_{i+1}-t_i)} - 1}b  - (t_{i+1}-t_i) \right)  \quad \text{and} \quad \frac{\mu_i^2}{\lambda_i} = \sigma_i^2, $$
    which plugged in \eqref{eq:IGchar}  yields \eqref{eq:charhatU}. In addition, it is straightforward to check that $\widehat \psi_{i,i+1}$  solves \eqref{eq:rootpsi}. As for \eqref{eq:charU}, it follows from the conditional characteristic function \eqref{eq:hestonchar} applied with $v=0$ between $t_i$ and $t_{i+1}$. In particular, a variation of constant formula on the Riccati equation for $\psi$ in \eqref{eq:Ric2}, with $v=0$, yields \eqref{eq:Ricvariation}.  
\end{proof}

With the help of Proposition~\ref{P:charcomparison}, the choice of the Inverse Gaussian distribution can now be justified by a simple discretization of the Riccati equation \eqref{eq:Ricvariation} as follows. 
\begin{remark}\label{R:charcomparison}
    The first step is to  discretize \eqref{eq:Ricvariation} using the right endpoint rule by  writing $\Delta_i =t_{i+1} -t_i $:
$$ \psi(\Delta_i) \approx  w \frac{e^{b\Delta_i}-1}{b} + \frac{c^2}{2} \int_0^{\Delta_i} e^{b({\Delta_i}-s)}  ds \psi^2(\Delta_i) = w \frac{e^{b\Delta_i}-1}{b} + \frac{c\sigma_i}{2} \psi^2(\Delta_i).   $$
This yields the quadratic equation \eqref{eq:rootpsi} as approximation for $\psi(\Delta_i)$. The root with a non-positive real part is given precisely by $\widehat \psi_{i,i+1}$ in \eqref{eq:widehatpsi}. As for $\widehat \phi_{i,i+1}$ in \eqref{eq:widehatpsi}, for small $\Delta_i$ we have that 
$$ \widehat \phi_{i,i+1} \approx \frac{a\Delta_i}{2} \widehat \psi_{i,i+1},$$
which corresponds to a trapezoidal discretization of  $\phi(\Delta_i)$ in \eqref{eq:Ricvariation1} with $\psi(\Delta_i)$ being approximated by $\widehat \psi_{i,i+1}$.  In other words, the discretization between $t_i$ and $t_{i+1}$ of the Riccati equations   \eqref{eq:Ricvariation1}-\eqref{eq:Ricvariation}  that govern the conditional distribution of the integrated process $U_{t_i,t_{i+1}}$ in \eqref{eq:charU} naturally leads to an Inverse Gaussian distribution of the form \eqref{eq:charhatU} and \eqref{eq:widehatpsi}. \qedsymbol
\end{remark}

Another advantage of our scheme using the Inverse Gaussian distribution is that such distribution emerges as the limiting distribution of the integrated process $U_{0,T}$ in two configurations:

\begin{remark}\label{R:largemeanrev}
\begin{itemize}
    \item In market regimes characterized by high mean reversion and  volatility of volatility:  the results of \cite{mechkov2015fast}  and later extended by \cite{abijaber2024reconciling,mccrickerd2019foundations} have shown that under the parameterization $c = -b\beta $ and $ a = -b\gamma$, with 
$\beta, \gamma > 0 $, as $ b \to \infty$, the distribution of $U_{0,T}$ converges to an Inverse Gaussian distribution. Furthermore, weak convergence holds for the entire process $(U_{0,t})_{t \geq 0}$ towards an Inverse Gaussian Lévy process in  Skorokhod spaces.  This observation suggests that in such market regimes, which are typically encountered when calibrating the Heston model to  the short end of the SPX volatility surface  (see Figure~\ref{fig:calib2013}), the {\normalfont iVi} scheme is expected to perform exceptionally well. In particular, even with a minimal number of time steps—potentially as few as one single step—our approach maintains its accuracy,  as already demonstrated in Figure~\ref{fig:intro}.
\item For large maturities, independently of the parameters, the limiting distribution of $U_{0,T}$ as $T\to \infty$ is also an Inverse Gaussian distribution as shown by \cite*{forde2011large}. Interestingly,  such limiting behavior has been used by \cite{tse2013low}  to justify the choice of an Inverse Gaussian distribution for the  distribution of $U_{t_i,t_{i+1}}$ conditional on the endpoints $(V_{t_i},V_{t_{i+1}})$ in Step 2 of the traditional approach recalled in the introduction.\qed
\end{itemize}
\end{remark}

\section{Numerical illustrations for the integrated process $U$}\label{S:U}

In this section, we examine the performance of the iVi scheme given by Algorithm~\ref{alg:simulation} for quantities involving the accumulated integrated process
$$U_{0,T} = \int_0^T V_s \, ds = \sum_{i=0}^{n-1} \int_{t_i}^{t_{i+1}} V_s \, ds = \sum_{i=0}^{n-1} U_{t_i,t_{i+1}},$$
which can be naturally approximated using our scheme by
\begin{tcolorbox}[colback=gray!20, colframe=gray!80, sharp corners]
$$\widehat{U}_{0,T} := \sum_{i=0}^{n-1} \widehat{U}_{t_i,t_{i+1}}.$$
\end{tcolorbox}
We focus on the following three distributional quantities which  also have a financial meaning:
\begin{enumerate}
    \item \textbf{First moment:} $\mathbb{E}[U_1]$, which corresponds to the variance swap at maturity $T=1$ if $U$ represents the integrated variance. The reference value is computed using:
     \[
    \mathbb{E}[U_T] = V_0 \frac{e^{bT} - 1}{b} + \frac{a}{b} \left( \frac{e^{bT} - 1}{b} - T \right).
    \]
    \item \textbf{Half-moment:} $\mathbb{E}[\sqrt{U_1}]$, representing the volatility swap. For the reference value, we use the inversion formula of the  Laplace transform in \cite{schurger2002laplace}:
    \[
    \mathbb{E}\left[\sqrt{U_1}\right] = \frac{1}{2\sqrt{\pi}} \int_0^\infty \frac{1 - \mathbb{E}[e^{-u U_1}]}{u^{3/2}} \, du,
    \]
    where the Laplace transform is recalled in \eqref{eq:hestonchar}.
    \item \textbf{Laplace transform:} $\mathbb{E}[e^{-U_1}]$, which can also be interpreted as the price of a zero-coupon bond if $V$ models the short rate. The explicit reference value  is recalled in \eqref{eq:hestonchar}.
\end{enumerate}

The parameters for our tests are listed in Table~\ref{tab:parameter_cases}. All three cases correspond to realistic parameter values. Cases 1 and 2 are derived from calibrating the Heston model to SPX market data on two distinct dates, October 10, 2017, and July 3, 2013. The fitted volatility surfaces are   illustrated  on Figures~\ref{fig:calib2017} and \ref{fig:calib2013}. Case 1 places an emphasis on short-dated options. Case 3, on the other hand, corresponds to the first case considered by \cite{andersen2007efficient} and represents the market dynamics for long-dated FX options. The parameter $\rho$ in Table~\ref{tab:parameter_cases} is not used in this section but will become relevant in the next section, where it is applied to the Heston model.

\begin{table}[h!]
\centering
\[
\begin{array}{|c|c|c|c|c|c|}
\hline
\textbf{Case} & V_0 & a & b & c & \rho \\ 
\hline
\text{\textbf{Case 1}} & 0.006 & 17.25\times 0.018 & -17.25 & 2.95 & -0.68 \\ 
\hline
\text{\textbf{Case 2}} & 0.023 & 2.15\times 0.057 & -2.15 & 0.86 & -0.70 \\ 
\hline
\text{\textbf{Case 3}} & 0.04 & 0.5\times 0.04 & -0.5 & 1.0 & -0.9 \\ 
\hline
    \end{array}
\]
\caption{Parameter values for the three cases.}
\label{tab:parameter_cases}
\end{table}

Case 1 is particularly challenging for simulation due to its high volatility of volatility and strong mean-reversion dynamics. Case 3 features relatively low mean-reversion compared to high volatility. Case 2 lies between the two extremes. In all three cases, the Feller condition  is violated, i.e.~$ a - \frac{c^2}{2} < 0$.

For comparison, we also implement the Quadratic-Exponential (QE) scheme of \cite{andersen2007efficient} and the second order scheme (AL) of \cite{alfonsi2010high}..

\begin{figure}[h!]
    \centering
    \includegraphics[width=.9\textwidth]{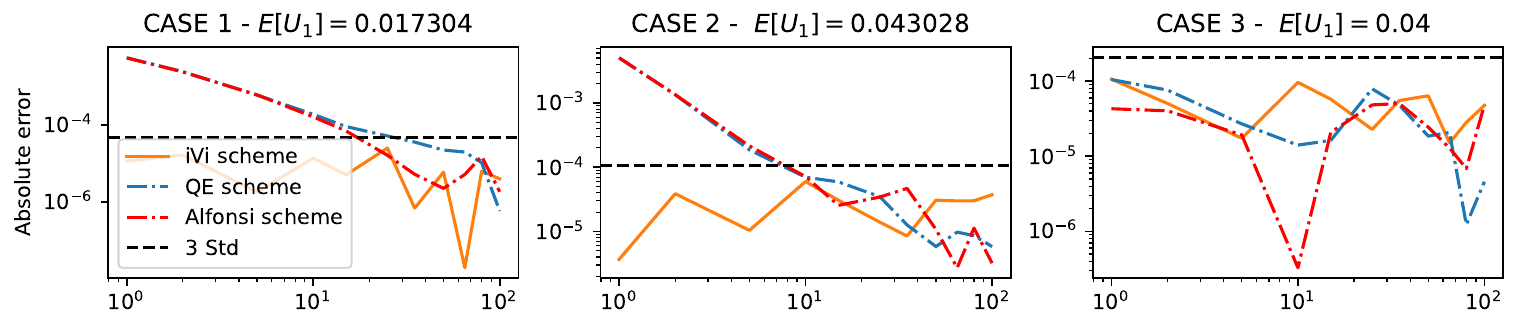} \\
    \includegraphics[width=.9\textwidth]{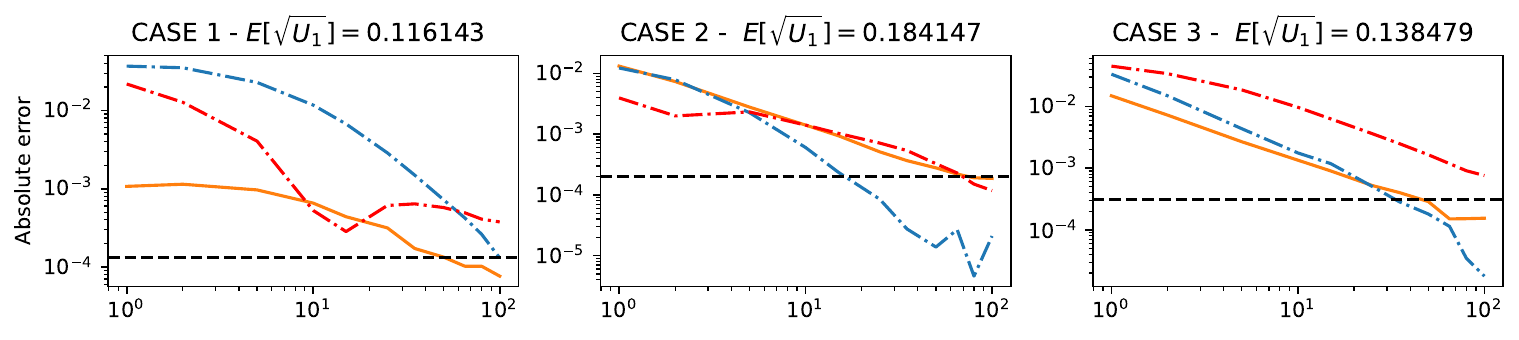} \\
    \includegraphics[width=.9\textwidth]{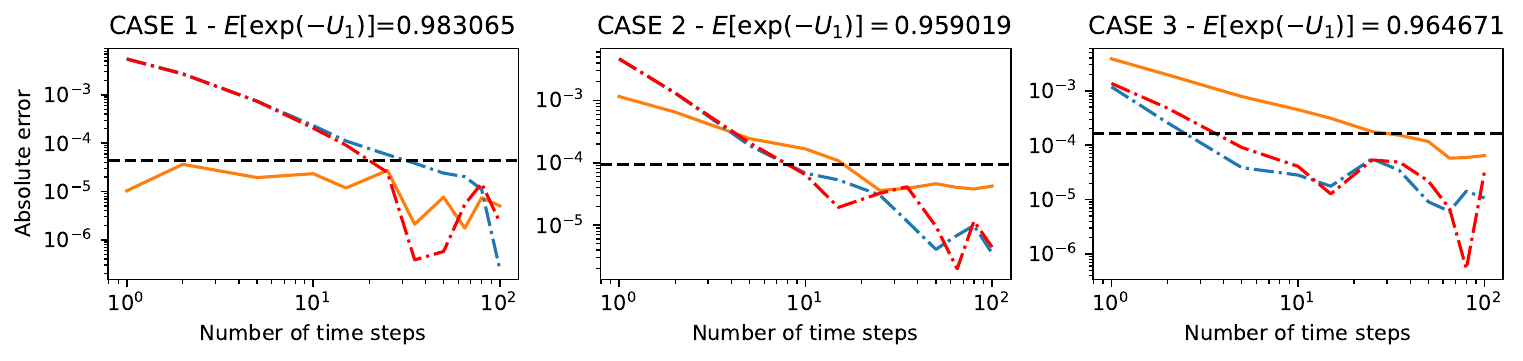}
    \caption{Errors on variance swaps, volatility swaps and Laplace transform of $\hat U_T$ in terms of the number of time steps  for  the three cases with  $T = 1$ and 2 million sample paths.}
    \label{fig:U} 
\end{figure}

Figure~\ref{fig:U} shows the absolute error between the schemes and reference values varying number of times steps on a uniform grid from 1 to 100 in log-log scale. Simulations use 2 million sample paths. The horizontal black dotted line represents three standard deviations of the Monte Carlo estimator, below which comparisons can be considered non-significant. We can observe that:
\begin{itemize}
    \item \textbf{Convergence:} All schemes converge as the number of time steps increases.
    \item \textbf{For \textbf{$\mathbb{E}[U_1]$:}} the iVi scheme achieves accuracy within three standard deviations across all cases, even with a single time step, in line with Proposition~\ref{P:moments}, outperforming the QE and AL schemes for the first two cases.
    \item \textbf{For Case 1:} the iVi scheme delivers highly accurate results for all quantities with a very small number of time steps, including $\mathbb{E}[e^{-U_1}]$, even with one single time step.
    \item \textbf{For Case 2:} The performance of the iVi scheme for the volatility swap and Laplace transform is comparable to other  schemes, the iVi scheme is more accurate when we have less than 5 time steps in 4 out of the 6 cases.
    \item \textbf{For Case 3:} The QE and AL schemes converge faster for the Laplace transform.
\end{itemize}

To sum up, for quantities on the integrated process $U$, 
the iVi scheme converges and competes favorably with the QE and AL schemes,  particularly  in challenging regimes with high volatility-of-volatility and strong mean reversion.

\section{Numerical illustrations  for the Heston model}\label{S:HestonNumerics}
In this section, we test our iVi scheme on   the \cite{heston1993closed} model where the stock price $S$ is given by 
\begin{align}\label{eq:HestonS}
    dS_t = S_t \sqrt{V_t} \left(\rho dW_t + \sqrt{1-\rho^2}dW_t^{\perp}\right), \quad S_0 >0,
\end{align}
where $V$ is given by \eqref{eq:cir}, $\rho \in [-1,1]$ and $W^{\perp}$ is a standard Brownian motion independent of $W$. 

In order to simulate $S$ it suffices to observe that 
$$ \log S_{t_{i+1}} = \log S_{t_i}  - \frac 1 2 U_{t_{i},t_{i+1}} + \rho Z_{t_i, t_{i+1}}  + \sqrt{1-\rho^2} \int_{t_i}^{t_{i+1}} \sqrt{V_s}dW_s^{\perp}, $$
and that conditional on $U_{i,i+1}$, $\int_{t_i}^{t_{i+1}} \sqrt{V_s}dW_s^{\perp} \sim \mathcal N(0, U_{t_i, t_{i+1}})$, for $i=0,\ldots, n-1$.

\begin{tcolorbox}[colback=gray!20, colframe=gray!80, sharp corners]
We can therefore simulate $(\log \widehat S_i)_{i=0,\ldots,n}$ 
using the outputs $(\widehat U, \widehat Z) $ of Algorithm \eqref{alg:simulation} using:
\begin{align}
\log \widehat S_0 &= \log S_0, \\
    \log \widehat S_{i+1} &= \log \widehat S_{i}  - \frac 1 2 \widehat U_{{i},{i+1}} + \rho \widehat Z_{i, {i+1}}  + \sqrt{1-\rho^2} \sqrt{ \widehat U_{{i},{i+1}}} N_i, \quad i=0,\ldots, n-1,  \label{eq:Sii}
\end{align}
where $(N_i)_{i=0,\ldots, n-1}$ are i.i.d.~standard Gaussian random variables. 
\end{tcolorbox}

Clearly, the update rule \eqref{eq:Sii} at the $i$-th step $i$ can be incorporated in Algorithm~\ref{alg:simulation} right after  \eqref{eq:Vii}.   Also, the $(N_i)_{i=0,\ldots, n-1}$ need to be taken independent  from the Gaussian and Uniform random variables used for the sampling of the Inverse Gaussian distribution in Algorithm~\ref{alg:IG_sampling}.

For our numerical experiment, we will consider call options on $S$ for  the three cases of Table~\ref{tab:parameter_cases} with maturity $T=1$ for the first cases and the longer maturity $T=10$ for the third case. Reference values are computed using Fourier inversion techniques on the characteristic function of the log-price which is  known explicitly in the Heston model, see \eqref{eq:hestonchar}.  Simulations use 2 million sample paths and we also benchmark against the QE and AL schemes.  

Figure~\ref{fig:call} displays the absolute error between the schemes and reference values varying number of times steps from 1 to 100 in log-log scale for In-the-Money (ITM), At-the-Money (ATM) and Out-of-The-Money (OTM).  We can  again  observe   the convergence of the iVi scheme.  For ITM call option, the iVi scheme  yields a fast convergence (compared to QE and AL schemes) in all three cases. Again, for Case 1, our scheme   achieves accuracy within three standard deviations with a single time step, in line with Figure~\ref{fig:U}. For ATM and OTM call options, as the strike increases,  the convergence speed seems to deteriorate for our scheme (compared to the QE scheme).  Still for case 1 (first column), our scheme achieves accurate results with one single time step.  
\begin{figure}[h!]
    \centering
    \includegraphics[width=.9\textwidth]{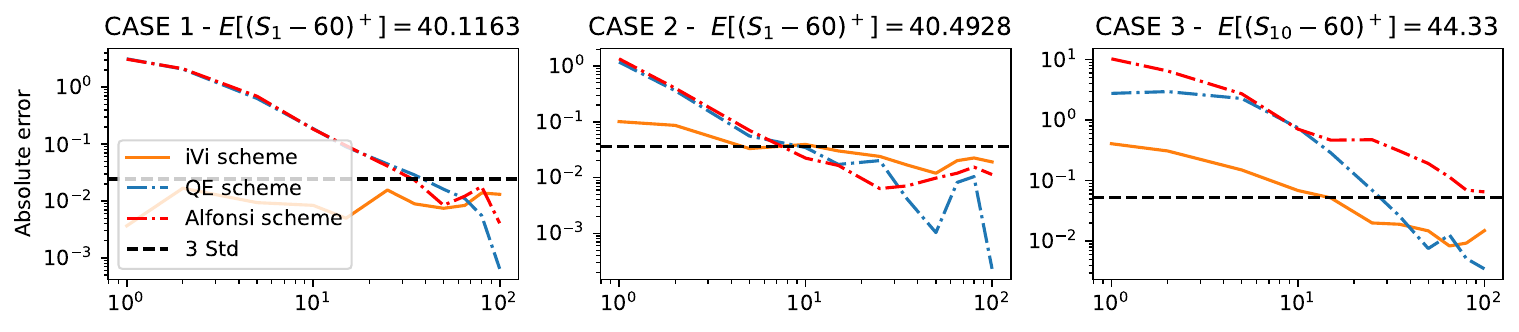} \\
    \includegraphics[width=.9\textwidth]{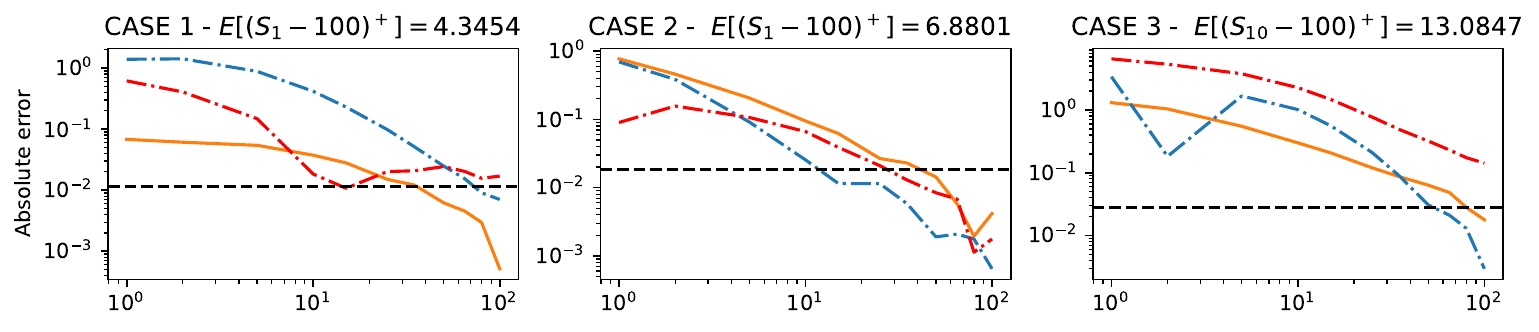} \\
    \includegraphics[width=.9\textwidth]{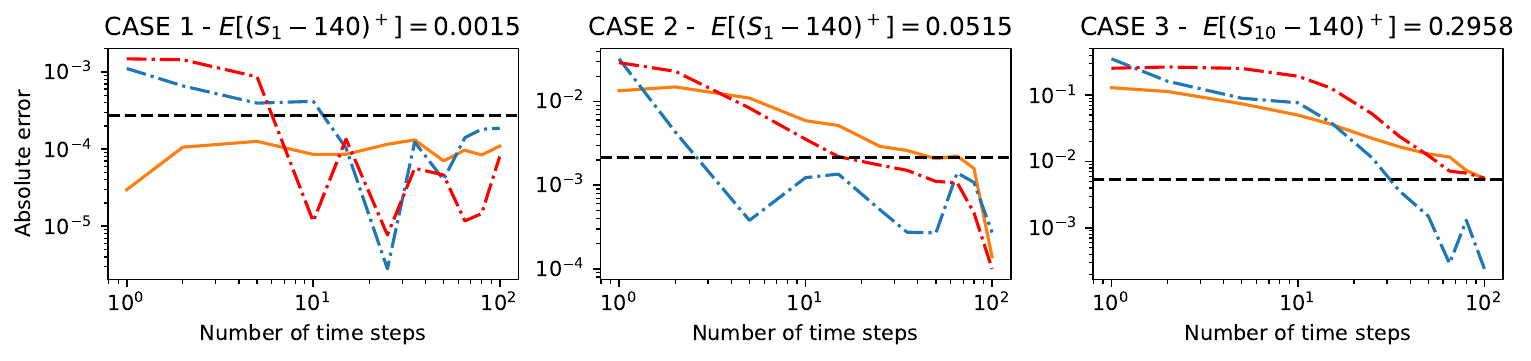}
    \caption{ITM, ATM and OTM call options  on $S$:  error in prices in  terms of number of time steps for the three cases  with 2 million sample paths.}
    \label{fig:call} 
\end{figure}

\begin{figure}[h!]
    \centering
    \includegraphics[width=.3\textwidth]{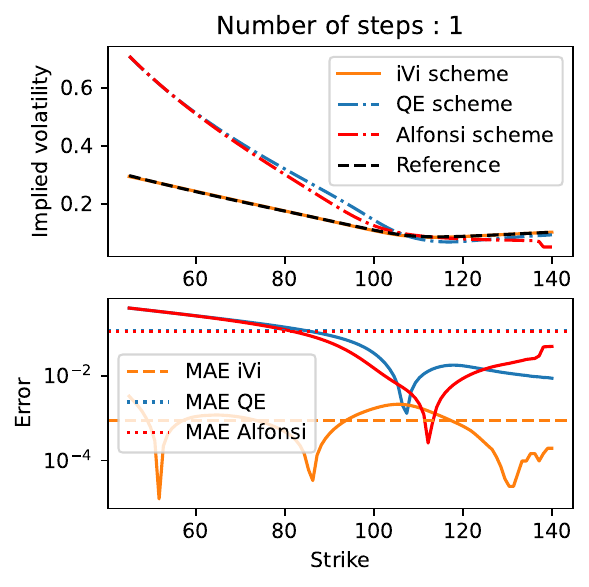} 
    \includegraphics[width=.3\textwidth]{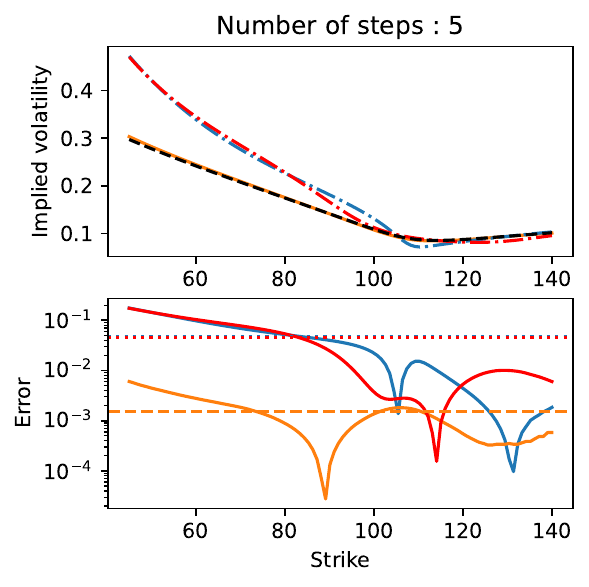} 
    \includegraphics[width=.3\textwidth]{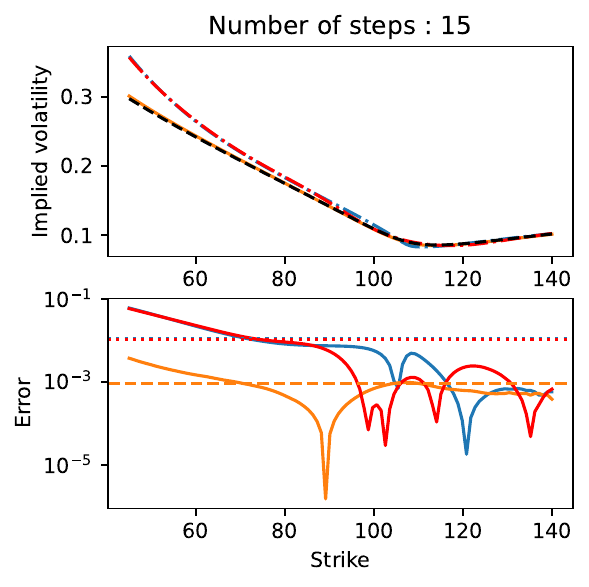}
    \caption{\textbf{Case 1:} Implied volatility slice for $T = 1$ and 2 million sample paths.}
    \label{fig:volcase1} 
\end{figure}

\begin{figure}[h!]
    \centering
    \includegraphics[width=.3\textwidth]{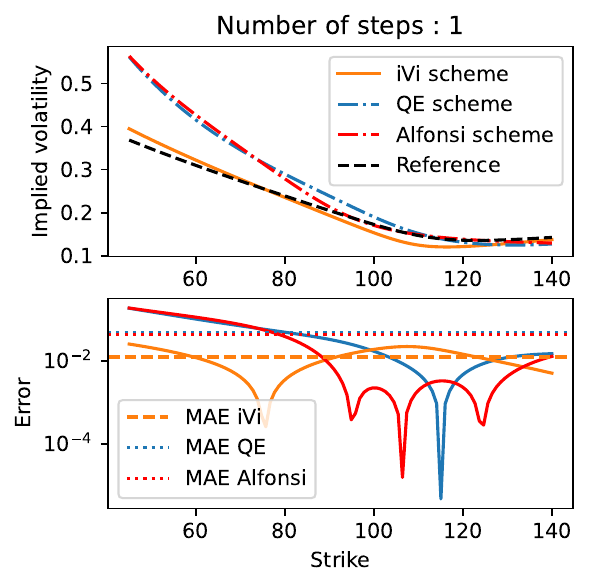} 
    \includegraphics[width=.3\textwidth]{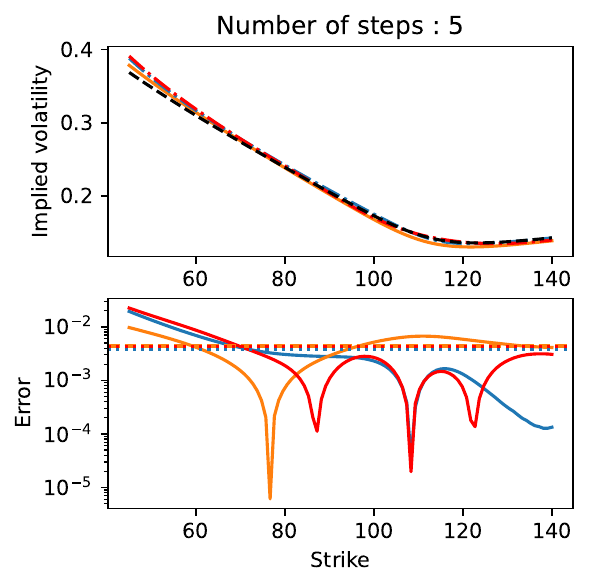} 
    \includegraphics[width=.3\textwidth]{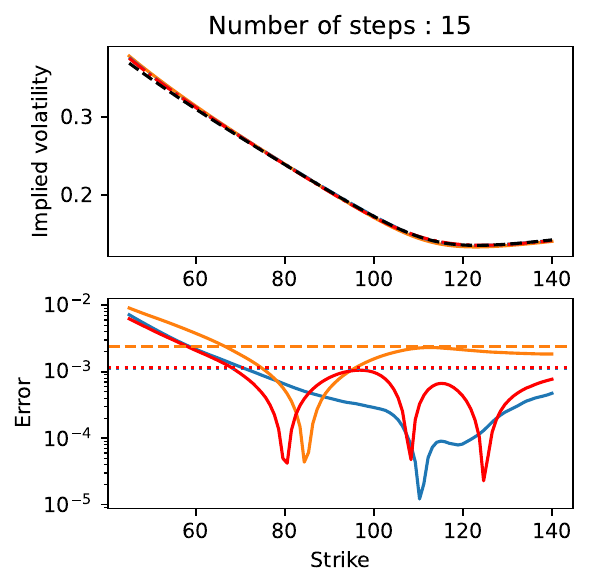}
    \caption{\textbf{Case 2:} Implied volatility slice for   $T = 1$ and 2 million sample paths.}
    \label{fig:volcase2} 
\end{figure}

\begin{figure}[h!]
    \centering
    \includegraphics[width=.3\textwidth]{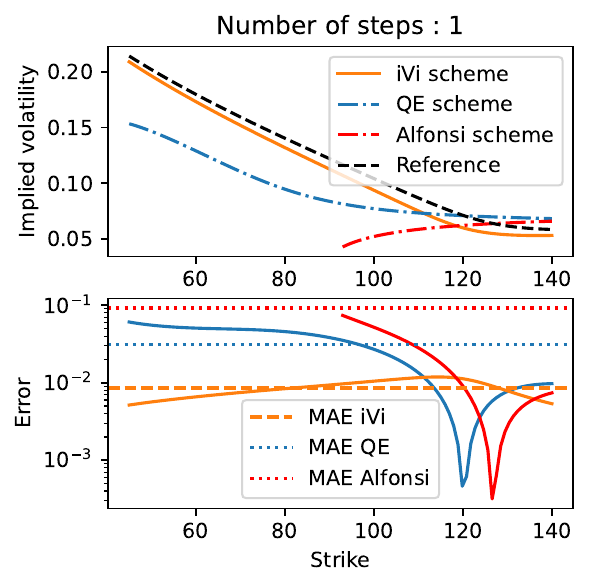} 
    \includegraphics[width=.3\textwidth]{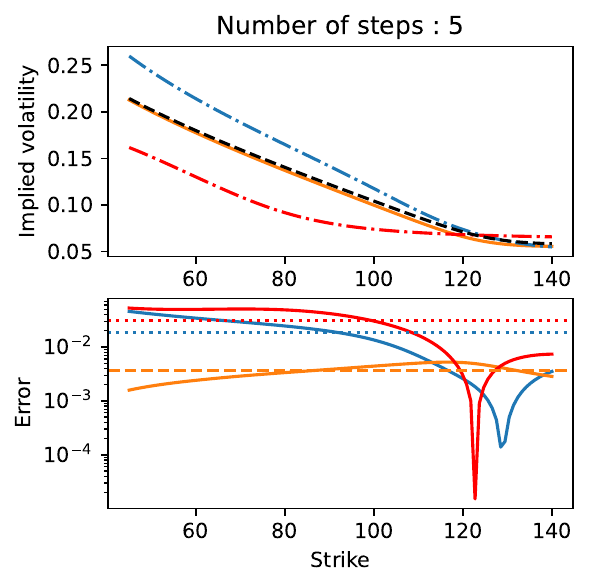} 
    \includegraphics[width=.3\textwidth]{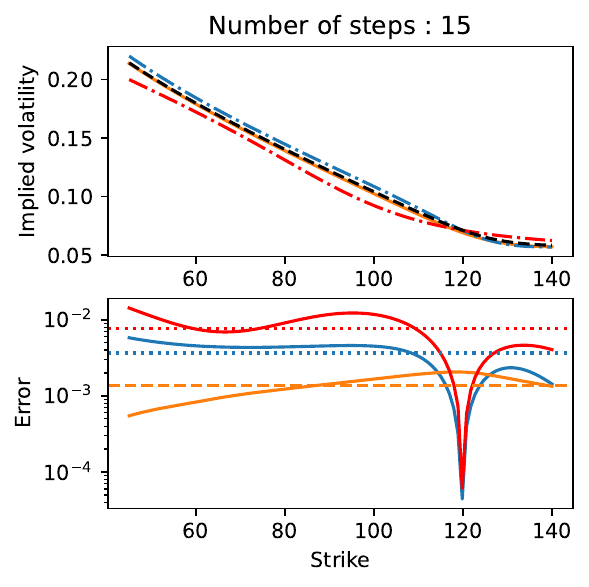}
    \caption{\textbf{Case 3:} Implied volatility slice for   $T = 10 $ and 2 million sample paths.}
    \label{fig:volcase3} 
\end{figure}

Figures~\ref{fig:volcase1}, \ref{fig:volcase2}, and \ref{fig:volcase3} display the full implied volatility slices for the three cases, corresponding to simulations with 1, 5, and 15 time steps. These figures also include the absolute errors in implied volatility and the mean absolute errors across the entire slice. The iVi scheme demonstrates excellent accuracy with just 1 and 5 time steps, particularly for Case 1. With 15 time steps, the slices of the iVi scheme and the reference values become nearly indistinguishable for all three cases. Compared to the QE and AL schemes, the iVi scheme is always more accurate for (deep) ITM call options, and for the whole volatility slice  our scheme outperforms, in terms of Mean Absolute Error (MAE), for 7 out of the 9 plots.  

Additionally, for Case 1, Figure~\ref{fig:intro} provides  six slices of the volatility surface, computed using the iVi scheme with only a single time step per slice, highlighting the scheme's efficiency in approximating the entire surface with minimal discretization in high mean-reversion and volatility-of-volatility market regimes.

Finally, to illustrate the relevance of the iVi scheme for practical applications, for instance, when the Heston  model is used  as a component of a Local Stochastic Volatility model, a small number of paths is typically used with a fixed number of time steps. Figure \ref{fig:call2} shows the option prices obtained from the three schemes using fewer paths (ranging from 10k to 200k) with 50 time steps. Among the three schemes, the iVi scheme appears to produce the most stable results across different numbers of sample paths.

\begin{figure}[h!]
    \centering
    \includegraphics[width=.9\textwidth]{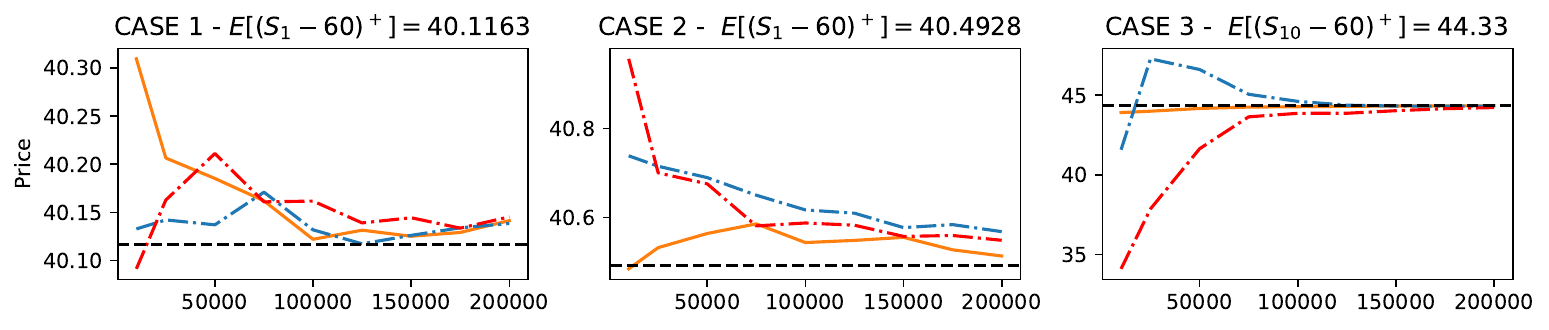} \\
    \includegraphics[width=.9\textwidth]{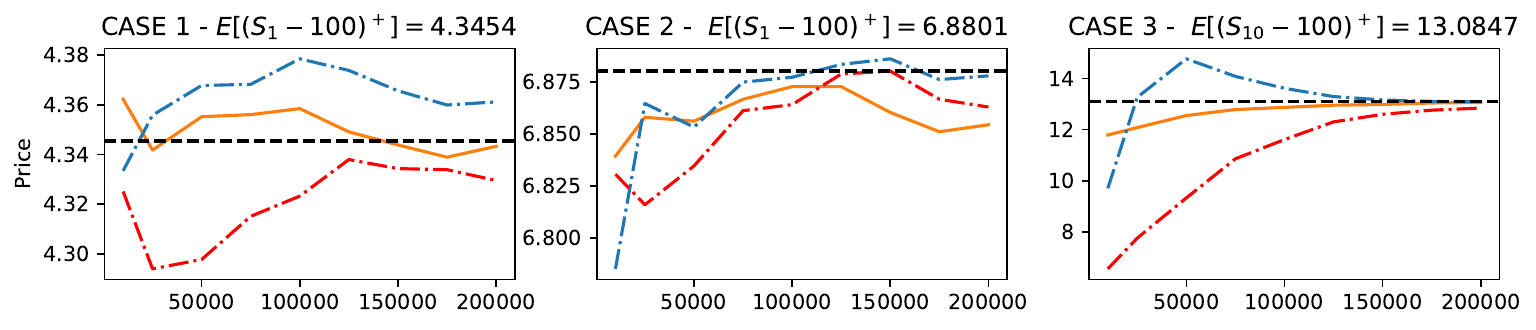} \\
    \includegraphics[width=.9\textwidth]{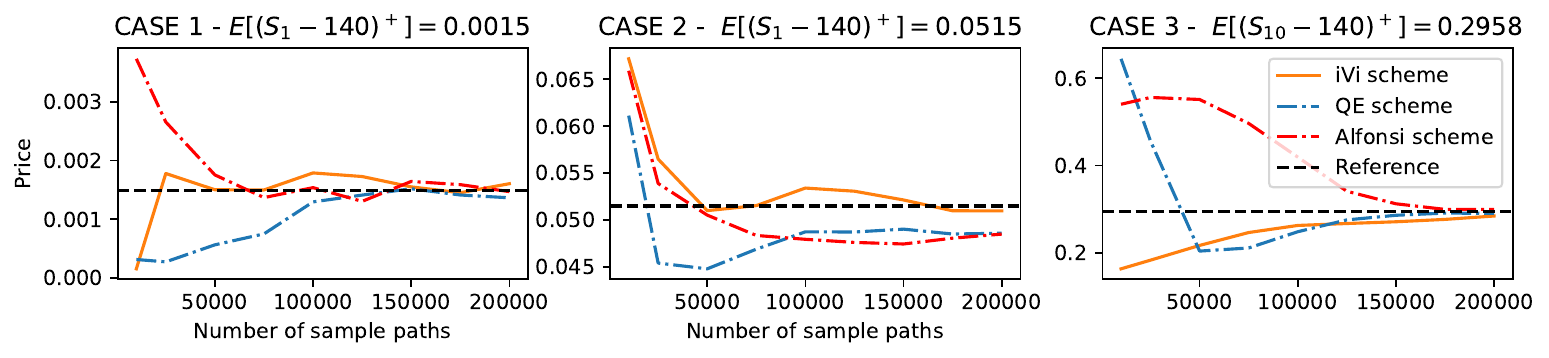}
    \caption{ITM, ATM and OTM call options  on $S$:  prices in  terms of the  number of sample paths for the three cases  with 50 time steps.}
    \label{fig:call2} 
\end{figure}

All in all,  the iVi scheme seems pretty competitive across several parameter sets, strikes and maturities when pricing call options in the Heston model, especially with a little number of time steps. It is worth noting that second and third-order schemes of \cite{alfonsi2010high}, as well as the QE scheme, all rely on switching mechanisms, which can be less robust, especially in the high volatility-of-volatility and fast mean-reversion regimes. In contrast, the iVi scheme already incorporates the Inverse Gaussian limiting distribution in such regimes, and appears to perform more reliably, without the need of switching mechanisms.

\section{Extensions}
Extension of the method to multivariate settings, even in affine models such as Wishart-type processes, is not straightforward. The Inverse Gaussian distribution used here is inherently one-dimensional, and extending it to higher dimensions while preserving tractability appears nontrivial. However, the scheme does naturally extend to path-dependent settings and can be used to efficiently simulate classes of Volterra and rough Heston models. This has been developed very recently in \cite{abijaber2025simulating}. In particular, \cite[Theorem 2.2 and Corollary 2.3]{abijaber2025simulating} guarantee the theoretical convergence of the iVi scheme.

\appendix
\section{Inverse Gaussian and its sampling}\label{A:IG}

The Inverse Gaussian distribution, also known as the Wald distribution, is a continuous probability distribution on $\R_+$ with two parameters:    $ \mu > 0$ (mean parameter) and $\lambda > 0$ (shape parameter). The probability density function  of the Inverse Gaussian distribution is given by:
\[
f(x ; \mu, \lambda) = \sqrt{\frac{\lambda}{2\pi x^3}} \exp \left( -\frac{\lambda (x - \mu)^2}{2\mu^2 x} \right), \quad x > 0.
\]
We  denote  $X\sim IG(\mu, \lambda)$ to indicate that $X$ is a random variable with an inverse Gaussian distribution with mean parameter $\mu$ and shape parameter $\lambda$. Its characteristic function is given by 
\begin{align}\label{eq:IGchar}
\mathbb E \left [ \exp\left( wX\right) \right]= \exp\left(\frac{\lambda}{\mu} \left(1 - \sqrt{1 - \frac{2w\mu^2 }{\lambda}}\right)\right),
\end{align}
for all $w\in \mathbb C$ such that  $\Re(w)\leq 0$.  In addition, its mean is given by 
\begin{align}\label{eq:IGmean}
    \mathbb E[X] = \mu.
\end{align}

Inverse Gaussian random variables can be simulated easily using one Gaussian random variable and one Uniform random variable using  an acceptance-rejection step as shown in \citet*{michael1976generating}, we recall the algorithm here.

\begin{algorithm}[H]
\caption{Sampling from the Inverse Gaussian Distribution}\label{alg:IG_sampling}
\begin{algorithmic}[1]
\State \textbf{Input:} Parameters \( \mu > 0 \), \( \lambda > 0 \).
\State \textbf{Output:} Sample \( IG \) from the Inverse Gaussian distribution.

\State   
Generate \( \xi \sim \mathcal{N}(0, 1) \) and compute \( Y = \xi^2 \).

\State Compute the candidate value \( X \):  
$$
X = \mu + \frac{\mu^2 Y}{2\lambda} - \frac{\mu}{2\lambda} \sqrt{4\mu\lambda Y + \mu^2 Y^2}.
$$

\State Generate a uniform random variable:  
Sample \( \eta \sim \text{Uniform}(0, 1) \).

\State Select the output:  
\If{$ \eta \leq \frac{\mu}{\mu + X} $}
    \State Set the output \( IG = X \).
\Else
    \State Set \( IG = \frac{\mu^2}{X} \).
\EndIf
\end{algorithmic}
\end{algorithm}

\section{Heston's characteristic function}\label{S:Heston}

We recall the expression of the joint characteristic function of the integrated variance $U_{t,T}=\int_t^T V_s ds$ and $\log S$ in the \cite{heston1993closed} model in \eqref{eq:HestonS} with $V$ given by \eqref{eq:cir}.  Let $u,w \in  \mathbb{C}$ be such that
\begin{equation}
    \Re w + \frac{1}{2} \left((\Re u)^2 -\Re u\right)  \le 0.
\end{equation}
The joint conditional characteristic function of $(\log S, U)$ is given by 
\begin{equation} \label{eq:hestonchar}
 \mathbb{E} \left[ \left. \exp\left(  u \log \frac{S_T}{ S_t} + w U_{t,T}\right) \right| V_t \right] = \exp\left(\phi \left( T-t \right) + \psi \left( T-t \right) V_t \right), \quad t \leq T,
\end{equation}
where $(\phi, \psi)$  are explicitly given by
\begin{equation}\label{eq:Hestonexplicit}
\begin{aligned}
\psi(t)&= \frac{\beta(u)-D(u,w)}{c^2}\frac{1-e^{-D(u,w)t}}{1-G(u,w)e^{-D(u,w)t}}, \\
\phi(t)&= \frac{a}{c^2}\left((\beta(u)-D(u,w))t - 2 \log\left( \frac{G(u,w)e^{-D(u,w)t}-1}{G(u,w)-1} \right)\right), \quad t\geq 0,  \\
\beta(u) &= - b  - u \rho c, \quad  D(u,w)= \sqrt{\beta(u)^2  + c^2 ( -2w + u - u^2)}, \quad G(u,w) =\frac{\beta(u)-D(u,w)}{\beta(u)+ D(u,w)}.
\end{aligned}    
\end{equation}

In particular,   $(\phi,\psi)$ solve  the following system of  Riccati equations 
\begin{align}
\phi'(t) & = a \psi(t), \quad \phi(0) = 0, 
 \label{eq:Ric1}\\
   \psi'(t) & =   \frac{c^2}{2}  \psi^2(t) +  \left( \rho c u  + b \right) \psi(t)  + w + \frac{u^2-u}{2}, \qquad  \psi(0) = 0. \label{eq:Ric2}
\end{align}

See for example \cite[Theorem 2.1]{abijaber2024reconciling} and \cite[Chapter 2]{gatheral2011volatility}, \cite{albrecher2007little} for the explicit derivation of the formulas, and \cite{heston1993closed} for the initial derivation.

\section{Calibrated volatility surfaces on SPX data}\label{S:calibrated}

\begin{figure}[H]
    \centering
    \includegraphics[width=0.66\textwidth]{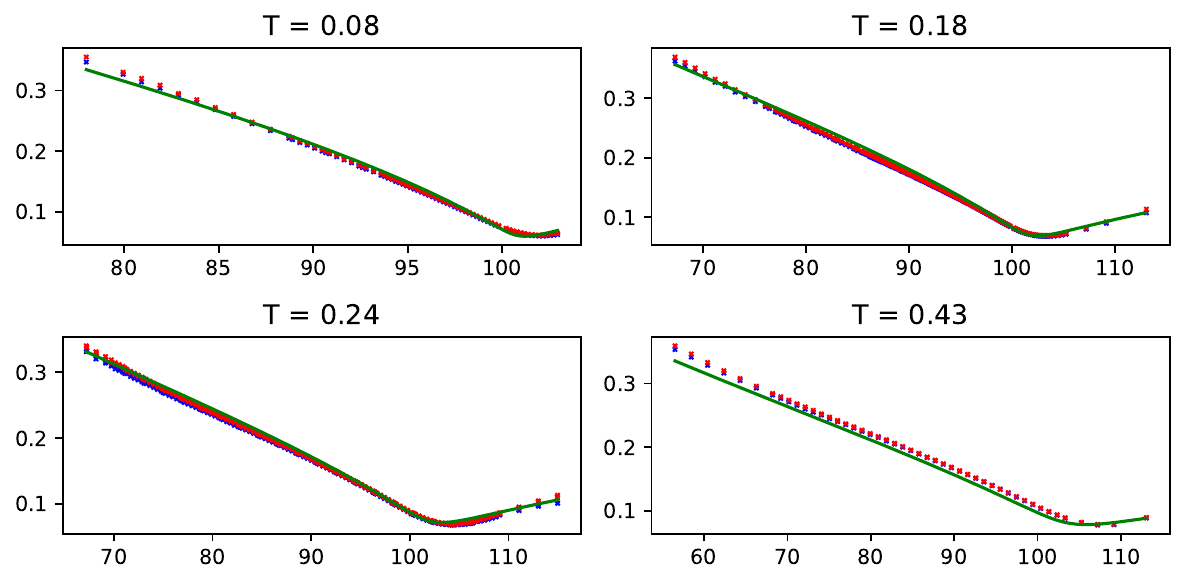} \\
    \caption{Implied volatility surface of the calibrated Heston model (green) on the market bid and ask implied volatilities (red and blue dots) on  October 10, 2017. Calibrated parameters correspond to Case 1 in Table~\ref{tab:parameter_cases}.  Market
data from the CBOE website \url{https://datashop.cboe.com/}.}
    \label{fig:calib2017} 
\end{figure}

\begin{figure}[H]
    \centering
    \includegraphics[width=.9\textwidth]{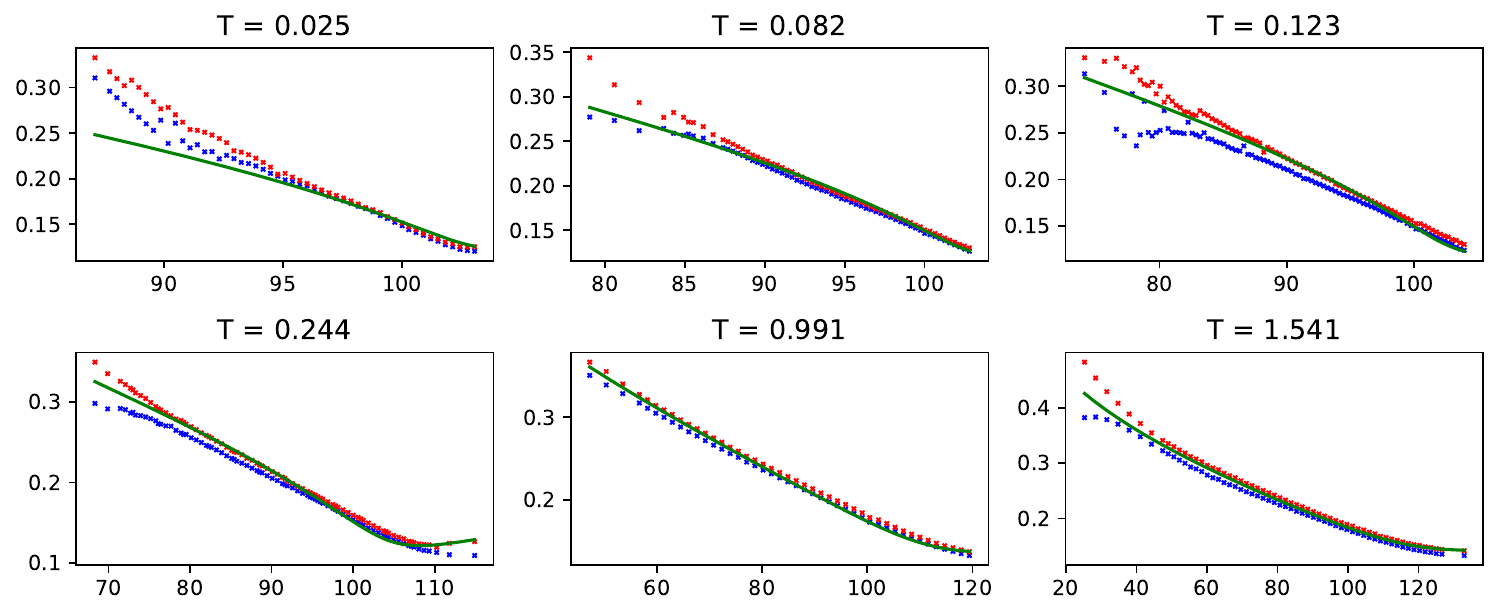} \\
    \caption{Implied volatility surface of the calibrated Heston model (green) on the market bid and ask implied volatilities (red and blue dots) on July 3, 2013. Calibrated parameters correspond to Case 2 in Table~\ref{tab:parameter_cases}. Market
data from the CBOE website \url{https://datashop.cboe.com/}.}
    \label{fig:calib2013} 
\end{figure}

\bibliographystyle{plainnat}
\bibliography{bibl}

\begin{thebibliography}{32}
\providecommand{\natexlab}[1]{#1}
\providecommand{\url}[1]{\texttt{#1}}
\expandafter\ifx\csname urlstyle\endcsname\relax
  \providecommand{\doi}[1]{doi: #1}\else
  \providecommand{\doi}{doi: \begingroup \urlstyle{rm}\Url}\fi

\bibitem[Abi~Jaber and Attal(2025)]{abijaber2025simulating}
Eduardo Abi~Jaber and Elie Attal.
\newblock Simulating integrated {V}olterra square-root processes and {V}olterra
  {H}eston models via {I}nverse {G}aussian.
\newblock \emph{arXiv preprint arXiv:2504.19885}, 2025.

\bibitem[Abi~Jaber and De~Carvalho(2024)]{abijaber2024reconciling}
Eduardo Abi~Jaber and Nathan De~Carvalho.
\newblock Reconciling rough volatility with jumps.
\newblock \emph{SIAM Journal on Financial Mathematics}, 15\penalty0
  (3):\penalty0 785--823, 2024.

\bibitem[Albrecher et~al.(2007)Albrecher, Mayer, Schoutens, and
  Tistaert]{albrecher2007little}
Hansj{\"o}rg Albrecher, Philipp Mayer, Wim Schoutens, and Jurgen Tistaert.
\newblock The little {H}eston trap.
\newblock \emph{Wilmott}, \penalty0 (1):\penalty0 83--92, 2007.

\bibitem[Alfonsi(2005)]{alfonsi2005discretization}
Aur{\'e}lien Alfonsi.
\newblock On the discretization schemes for the {CIR} (and {B}essel squared)
  processes.
\newblock \emph{Monte Carlo Methods Appl.}, 11\penalty0 (4):\penalty0 355--384,
  2005.

\bibitem[Alfonsi(2010)]{alfonsi2010high}
Aur{\'e}lien Alfonsi.
\newblock High order discretization schemes for the {CIR} process: application
  to affine term structure and {H}eston models.
\newblock \emph{Mathematics of computation}, 79\penalty0 (269):\penalty0
  209--237, 2010.

\bibitem[Andersen(2008)]{andersen2007efficient}
Leif Andersen.
\newblock Simple and efficient simulation of the {H}eston stochastic volatility
  model.
\newblock \emph{Journal of Computational Finance}, 11\penalty0 (3):\penalty0
  1--43, 2008.

\bibitem[B{\'e}gin et~al.(2015)B{\'e}gin, B{\'e}dard, and
  Gaillardetz]{begin2015simulating}
Jean-Fran{\c{c}}ois B{\'e}gin, Myl{\`e}ne B{\'e}dard, and Patrice Gaillardetz.
\newblock Simulating from the {H}eston model: A gamma approximation scheme.
\newblock \emph{Monte Carlo Methods and Applications}, 21\penalty0
  (3):\penalty0 205--231, 2015.

\bibitem[Berkaoui et~al.(2008)Berkaoui, Bossy, and Diop]{berkaoui2008euler}
Abdel Berkaoui, Mireille Bossy, and Awa Diop.
\newblock Euler scheme for {SDE}s with non-{L}ipschitz diffusion coefficient:
  strong convergence.
\newblock \emph{ESAIM: Probability and Statistics}, 12:\penalty0 1--11, 2008.

\bibitem[Broadie and Kaya(2006)]{broadie2006exact}
Mark Broadie and {\"O}zg{\"u}r Kaya.
\newblock Exact simulation of stochastic volatility and other affine jump
  diffusion processes.
\newblock \emph{Operations research}, 54\penalty0 (2):\penalty0 217--231, 2006.

\bibitem[Choi and Kwok(2024)]{choi2024simulation}
Jaehyuk Choi and Yue~Kuen Kwok.
\newblock Simulation schemes for the {H}eston model with {P}oisson
  conditioning.
\newblock \emph{European Journal of Operational Research}, 314\penalty0
  (1):\penalty0 363--376, 2024.

\bibitem[Cox et~al.(1985)Cox, Ingersoll~Jr, and Ross]{cox1985theory}
John~C Cox, Jonathan~E Ingersoll~Jr, and Stephen~A Ross.
\newblock A theory of the term structure of interest rates.
\newblock \emph{Econometrica: Journal of the Econometric Society}, pages
  385--407, 1985.

\bibitem[Deelstra and Delbaen(1998)]{deelstra1998convergence}
Griselda Deelstra and Freddy Delbaen.
\newblock Convergence of discretized stochastic (interest rate) processes with
  stochastic drift term.
\newblock \emph{Applied stochastic models and data analysis}, 14\penalty0
  (1):\penalty0 77--84, 1998.

\bibitem[Duffie and Singleton(2012)]{duffie2012credit}
Darrell Duffie and Kenneth~J Singleton.
\newblock Credit risk: pricing, measurement, and management.
\newblock In \emph{Credit Risk}. Princeton university press, 2012.

\bibitem[Duffie et~al.(2003)Duffie, Filipovi{\'c}, and
  Schachermayer]{duffie2003affine}
Darrell Duffie, Damir Filipovi{\'c}, and Walter Schachermayer.
\newblock Affine processes and applications in finance.
\newblock \emph{The Annals of Applied Probability}, 13\penalty0 (3):\penalty0
  984--1053, 2003.

\bibitem[Forde and Jacquier(2011)]{forde2011large}
Martin Forde and Antoine Jacquier.
\newblock The large-maturity smile for the {H}eston model.
\newblock \emph{Finance and Stochastics}, 15\penalty0 (4):\penalty0 755--780,
  2011.

\bibitem[Gatheral(2011)]{gatheral2011volatility}
J~Gatheral.
\newblock \emph{The volatility surface: A Practitioner's Guide}.
\newblock John Wiley and Sons, Inc, 2011.

\bibitem[Glasserman and Kim(2011)]{glasserman2011gamma}
Paul Glasserman and Kyoung-Kuk Kim.
\newblock Gamma expansion of the {H}eston stochastic volatility model.
\newblock \emph{Finance and Stochastics}, 15:\penalty0 267--296, 2011.

\bibitem[Gy{\"o}ngy and R{\'a}sonyi(2011)]{gyongy2011note}
Istv{\'a}n Gy{\"o}ngy and Mikl{\'o}s R{\'a}sonyi.
\newblock A note on {E}uler approximations for {SDE}s with {H}{\"o}lder
  continuous diffusion coefficients.
\newblock \emph{Stochastic processes and their applications}, 121\penalty0
  (10):\penalty0 2189--2200, 2011.

\bibitem[Heston(1993)]{heston1993closed}
Steven~L Heston.
\newblock A closed-form solution for options with stochastic volatility with
  applications to bond and currency options.
\newblock \emph{The review of financial studies}, 6\penalty0 (2):\penalty0
  327--343, 1993.

\bibitem[Higham and Mao(2005)]{higham2005convergence}
Desmond~J Higham and Xuerong Mao.
\newblock Convergence of {M}onte {C}arlo simulations involving the
  mean-reverting square root process.
\newblock \emph{Journal of Computational Finance}, 8\penalty0 (3):\penalty0
  35--61, 2005.

\bibitem[Kahl and J{\"a}ckel(2006)]{kahl2006fast}
Christian Kahl and Peter J{\"a}ckel.
\newblock Fast strong approximation {M}onte {C}arlo schemes for stochastic
  volatility models.
\newblock \emph{Quantitative Finance}, 6\penalty0 (6):\penalty0 513--536, 2006.

\bibitem[Lileika and Mackevi{\v{c}}ius(2020)]{lileika2020weak}
Gytenis Lileika and Vigirdas Mackevi{\v{c}}ius.
\newblock Weak approximation of {CKLS} and {CEV} processes by discrete random
  variables.
\newblock \emph{Lithuanian Mathematical Journal}, 60:\penalty0 208--224, 2020.

\bibitem[McCrickerd(2019)]{mccrickerd2019foundations}
Ryan McCrickerd.
\newblock On spatially irregular ordinary differential equations and a pathwise
  volatility modelling framework.
\newblock \emph{arXiv preprint arXiv:1902.01673}, 2019.

\bibitem[Mechkov(2015)]{mechkov2015fast}
Serguei Mechkov.
\newblock Fast-reversion limit of the {H}eston model.
\newblock \emph{Available at SSRN 2418631}, 2015.

\bibitem[Michael et~al.(1976)Michael, Schucany, and
  Haas]{michael1976generating}
John~R Michael, William~R Schucany, and Roy~W Haas.
\newblock Generating random variates using transformations with multiple roots.
\newblock \emph{The American Statistician}, 30\penalty0 (2):\penalty0 88--90,
  1976.

\bibitem[Neuenkirch and Szpruch(2014)]{neuenkirch2014first}
Andreas Neuenkirch and Lukasz Szpruch.
\newblock First order strong approximations of scalar {SDE}s defined in a
  domain.
\newblock \emph{Numerische Mathematik}, 128:\penalty0 103--136, 2014.

\bibitem[Possama{\"\i} and Gauthier(2011)]{possamai2011efficient}
Dylan Possama{\"\i} and Pierre Gauthier.
\newblock Efficient simulation of the double {H}eston model.
\newblock \emph{The IUP Journal of Computational Mathematics}, 4\penalty0
  (3):\penalty0 23--73, 2011.

\bibitem[Revuz and Yor(2013)]{revuz2013continuous}
Daniel Revuz and Marc Yor.
\newblock \emph{Continuous martingales and {B}rownian motion}, volume 293.
\newblock Springer Science \& Business Media, 2013.

\bibitem[Sch{\"u}rger(2002)]{schurger2002laplace}
Klaus Sch{\"u}rger.
\newblock Laplace transforms and suprema of stochastic processes.
\newblock \emph{Advances in finance and stochastics: essays in honour of Dieter
  Sondermann}, pages 285--294, 2002.

\bibitem[Tse and Wan(2013)]{tse2013low}
Shu~Tong Tse and Justin~WL Wan.
\newblock Low-bias simulation scheme for the {H}eston model by {I}nverse
  {G}aussian approximation.
\newblock \emph{Quantitative finance}, 13\penalty0 (6):\penalty0 919--937,
  2013.

\bibitem[Van~Haastrecht and Pelsser(2010)]{van2010efficient}
Alexander Van~Haastrecht and Antoon Pelsser.
\newblock Efficient, almost exact simulation of the {H}eston stochastic
  volatility model.
\newblock \emph{International Journal of Theoretical and Applied Finance},
  13\penalty0 (01):\penalty0 1--43, 2010.

\bibitem[Zhu(2011)]{zhu2011simple}
Jianwei Zhu.
\newblock A simple and accurate simulation approach to the {H}eston model.
\newblock \emph{Journal of Derivatives}, 18\penalty0 (4):\penalty0 26, 2011.

\end{thebibliography}

\end{document}